\theoremstyle{plain}
\newtheorem{theorem}{Theorem}[section]
\newtheorem{proposition}[theorem]{Proposition}
\newtheorem{lemma}[theorem]{Lemma}
\theoremstyle{definition}
\newtheorem{definition}[theorem]{Definition}
\newtheorem{assumption}[theorem]{Assumption}
\theoremstyle{remark}
\newtheorem{remark}[theorem]{Remark}
\icmltitlerunning{Doubly Robust Fusion of Many Treatments for Policy Learning}
\newcommand{\bone}{\mathbb{I}}
\newcommand{\ind}{\mbox{$\perp\!\!\!\perp$}}
\newcommand{\E}{\mathbb{E}}
\newcommand{\R}{\mathbb{R}}
\newcommand{\pr}{\mathbb{P}}
\DeclareMathOperator*{\argmin}{argmin}
\DeclareMathOperator*{\argmax}{argmax}
\def\cX{\mathcal{X}}
\def\cA{\mathcal{A}}
\def\cG{\mathcal{G}}
\def\cB{\mathcal{B}}
\def\cD{\mathcal{D}}
\newcommand{\T}{\top}
\def\tY{\tilde{Y}}
\newcommand\ep[1]{{\varepsilon(#1)}}
\newcommand\bep[1]{{\boldsymbol{\varepsilon}(#1)}}
\newcommand{\epi}[2]{{\varepsilon_{#2}(#1)}}
\newcommand\bz{{\boldsymbol{\zeta}}}
\newcommand\bg{{\boldsymbol{\beta}}}
\newcommand\zs[1]{{\bz^*_{#1}}}
\newcommand\gs[1]{{\bg^*_{#1}}}
\newcommand\bzs{{\bz^*}}
\newcommand\hz[1]{{\widehat{\bz}_{#1}}}
\newcommand\hzo[1]{{\widehat{\bz}^{\rm or}_{#1}}}
\newcommand\hzoT[1]{{\widehat{\bz}^{{\rm or}\T}_{#1}}}
\newcommand\hg[1]{{\widehat{\bg}_{#1}}}
\newcommand\hbz{{\widehat{\bz}}}
\newcommand\hbzo{{\widehat{\bz}^{\rm or}}}
\newcommand\hw[1]{{\widehat{w}_{#1}}}
\newcommand\ws[1]{{w_{#1}^*}}
\def\b{b}
\newcommand\Gs[1]{\cG_{#1}^*}
\def\sigep{\sigma_{\varepsilon}}
\def\Op{O_{\pr}}
\def\sumn{\sum_{i=1}^n}
\def\suma{\sum_{a\in\cA}}
\def\sumia{\sum_{i:A_i=a}}
\def\sumam{\sum_{a\in\Gs{\b}}}
\def\sumapm{\sum_{a^\prime\in\Gs{\b}}}
\def\sumaapm{\sum_{a,a^\prime\in\Gs{\b},a<a^\prime}}
\def\summ{\sum_{\b\in\cB}}
\def\maxm{\max_{\b\in\cB}}
\def\Ia{\bone(A_i=a)}
\def\Nmin{N_{\rm min}}
\def\Kmin{K_{\rm min}}
\newcommand\GhwX[1]{\Gamma_{\hat{w}X[#1]}}
\newcommand\Ghwe[1]{\Gamma_{\hat{w}\varepsilon[#1]}}
\newcommand\Gwe[1]{\Gamma_{w\varepsilon[#1]}}
\def\Zor{\mathcal{Z}^{\rm or}}
\newcommand\bzt{{\bar{\bz}}} 
\newcommand\bzm{{\tilde{\bz}}} 
\def\bv{\boldsymbol{v}}
\def\Cxu{C_3}
\def\Cxl{C_4}
\def\Cwl{C_1}
\def\Cwu{C_2}
\def\dB{d^{\cB}}
\def\dBs{d^{\cB*}}
\def\hdB{\hat{d}^{\cB}}
\def\cDB{\cD^{\cB}}
\def\pa{\pi_{a}}
\def\ma{\mu_{a}}
\def\pb{\pi_{b}}
\def\mb{\mu_{b}}
\begin{document}

\twocolumn[
\icmltitle{Doubly Robust Fusion of Many Treatments for Policy Learning}



\icmlsetsymbol{equal}{*}

\begin{icmlauthorlist}
\icmlauthor{Ke Zhu}{a,b,equal}
\icmlauthor{Jianing Chu}{c,equal}
\icmlauthor{Ilya Lipkovich}{d}
\icmlauthor{Wenyu Ye}{d}
\icmlauthor{Shu Yang}{a}
\end{icmlauthorlist}

\icmlaffiliation{a}{Department of Statistics, North Carolina State University, Raleigh, NC 27695, U.S.A.}
\icmlaffiliation{b}{Department of Biostatistics and Bioinformatics, Duke University, Durham, NC 27710, U.S.A.}
\icmlaffiliation{c}{Amazon (This work was done prior to joining Amazon)}
\icmlaffiliation{d}{Eli Lilly \& Company, Indianapolis, IN 46285, U.S.A.}

\icmlcorrespondingauthor{Shu Yang}{syang24@ncsu.edu}

\icmlkeywords{Machine Learning, ICML}

\vskip 0.3in
]



\printAffiliationsAndNotice{\icmlEqualContribution} 

\begin{abstract}
Individualized treatment rules/recommendations (ITRs) aim to improve patient outcomes by tailoring treatments to the characteristics of each individual. However, when there are many treatment groups, existing methods face significant challenges due to data sparsity within treatment groups and highly unbalanced covariate distributions across groups. To address these challenges, we propose a novel calibration-weighted treatment fusion procedure that robustly balances covariates across treatment groups and fuses similar treatments using a penalized working model. The fusion procedure ensures the recovery of latent treatment group structures when either the calibration model or the outcome model is correctly specified. In the fused treatment space, practitioners can seamlessly apply state-of-the-art ITR learning methods with the flexibility to utilize a subset of covariates, thereby achieving robustness while addressing practical concerns such as fairness. We establish theoretical guarantees, including consistency, the oracle property of treatment fusion, and regret bounds when integrated with multi-armed ITR learning methods such as policy trees. Simulation studies show superior group recovery and policy value compared to existing approaches. We illustrate the practical utility of our method using a nationwide electronic health record-derived de-identified database containing data from patients with Chronic Lymphocytic Leukemia and Small Lymphocytic Lymphoma.
\end{abstract}

\section{Introduction}
\label{sec:intro}

In precision medicine, individualized treatment rules (ITRs) aim to optimize patient outcomes by tailoring treatment recommendations to individuals based on their characteristics. This personalization is essential because treatment effects can vary across individuals. Developing a rule that recommends the most effective treatment for each person requires accounting for these variations in a systematic and data-driven manner. For settings with two or multiple treatments, numerous machine learning approaches have been developed for estimating ITRs, commonly referred to as ITR learning or policy learning. These methods can be broadly classified into two categories. The first involves modeling treatment outcomes, as in Q-learning \citep{watkins1992q,qian2011performance,song2015penalized}, A-learning \citep{murphy2003optimal,shi2018high}, or D-learning \citep{qi2018d}, where the focus is on estimating the expected outcome under each treatment to derive the optimal rule. The second class of methods directly optimizes the value function, which measures the expected outcome under a given decision rule \citep{zhang2012robust,zhao2012estimating,athey2021policy}. In these methods, inverse propensity score weighting (IPW) or augmented IPW (AIPW) estimators are employed to evaluate the value function, and the optimal rule is identified by maximizing the value function over a class of decision functions, such as linear ITRs \citep{zhang2012robust,zhao2012estimating} or tree-based ITRs \citep{zhang2015using,laber2015tree,athey2021policy}.

However, these approaches encounter significant challenges when the number of treatment levels becomes large \citep{rashid2020high}. When treatment levels are numerous, data is often sparse within each treatment group, making it difficult to estimate treatment effects accurately. Additionally, covariate shifts across treatment groups exacerbate instability, particularly when balancing methods like IPW are used. This instability is further amplified in underrepresented treatment groups, where propensity scores are small and highly variable.

A key insight to address these challenges lies in recognizing that many treatments share commonalities \citep{ma2022learning,ma2023learning}. For example, different pharmaceutical companies may develop treatments targeting the same disease mechanisms or symptoms, resulting in similar effects across treatments. By grouping such treatments into clusters, we can reduce the effective dimensionality of the treatment space. Treatments within the same group can be treated as equivalent, enabling the application of efficient multi-armed ITR learning methods to the grouped treatments.

Despite its potential, the task of treatment fusion introduces its own challenges due to sparse data and unbalanced covariate distributions. Sparse data within treatment groups necessitates the use of simple linear working models to prevent overfitting. However, these models risk misspecification, which can lead to biased fusion results. Additionally, severe covariate shifts across groups make traditional balancing methods like IPW unreliable, especially for treatments with very small sample sizes.

To overcome these difficulties, we propose a novel procedure called calibration-weighted treatment fusion. This method uses calibration weighting \citep{lee2023improving,wu2023transfer} to robustly balance covariates across treatment groups, addressing the instability of traditional balancing techniques. The calibrated weights are then used in a penalized working model with fused Lasso \citep{tibshirani2005sparsity} to group treatments based on their effects. Our procedure is doubly robust, meaning that the true latent group structure can be recovered as long as either the calibration weighting model or the outcome model is correctly specified. This robustness significantly enhances both the efficiency and reliability of treatment fusion compared to existing methods.

After performing data-driven fusion, practitioners can transparently review the grouping results and seamlessly apply state-of-the-art multi-armed ITR learning methods on the grouped treatments, such as policy trees \citep{zhou2023offline}, to align with their application contexts. This approach not only robustly reduces the dimensionality of the treatment space but also ensures flexibility, interpretability, and improved policy learning outcomes. By addressing the challenges of many treatments, our method provides a robust and practical framework for advancing precision medicine.

\subsection{Related work}

In this paper, we study settings with a large action space induced by a single discrete treatment variable with many levels \citep{saito2022off,saito2023off,peng2023offline,sachdeva2024off,aouali2024bayesian}. Related work considers policy evaluation or learning under alternative treatment structures.

\textbf{Combination treatments.} Some studies consider treatments formed by combinations of multiple variables \citep{liang2018estimating,agarwal2023synthetic,xu2024multi,xu2024optimal}. \citet{agarwal2023synthetic} propose synthetic combinations that impute counterfactuals via low-rank matrix completion under structural assumptions. 
\citet{gao2024causal} used a low-rank tensor with block structure to fuse treatments.
In contrast, we assume a group structure among treatment levels and apply calibration-weighted fused lasso. These approaches are complementary, targeting different structural assumptions.

\textbf{Continuous treatments.} Other methods focus on continuous treatments \citep[e.g.,][]{chernozhukov2019semi,cai2021deep}. \citet{cai2021deep} also explores action grouping in a continuous setting. These approaches differ methodologically from ours, which focuses on discrete actions.

\textbf{Complex treatments.} Recent work studies complex treatment types such as images, text, or chemical structures \citep{kaddour2021causal,nilforoshan2023zero,schweisthal2023reliable,marmarelis2024policy}. \citet{schweisthal2023reliable} address large treatment spaces induced by high-dimensional continuous variables using neural networks and constrained optimization, focusing on regions with sufficient overlap. In contrast, our method targets the entire population, is easy to implement, compatible with existing algorithms, and interpretable.

\section{Preliminaries}

We consider a $K$-armed setting where the treatment 
$$
A\in\cA:=\{1,2,\dots, K\}.
$$
Let $X\in\cX\subseteq\R^p$ denote a vector of covariates, and $Y\in\R$ denote the observed outcome of interest. We assume that larger values of $Y$ are preferred by convention. The observed data $(Y_i, A_i, X_i)$ are assumed to be independent and identically distributed. The potential outcomes $Y(a)$, $a \in \cA$, represent the outcomes that would be observed if a subject received treatment $a$. The following standard assumptions in causal inference are made \citep{rubin1978bayesian}.

\begin{assumption}[Identification]\label{assump:causal}
(i) Consistency: $Y = Y(A)$. (ii) Unconfoundedness: $Y(a) \ind A \mid X$, $\forall a \in \cA$. (iii) Positivity: $0 < \pr(A = a \mid X = x) < 1$ for all $a \in \cA$.
\end{assumption}

An individualized treatment rule (ITR) is a decision function 
$
d(\cdot): \cX \to \cA,
$
which maps the covariate space to the treatment space. For any arbitrary ITR $d(\cdot)$, the corresponding potential outcome is defined as $Y(d(X))$, which would be observed if a randomly chosen individual were assigned treatment according to $d(\cdot)$, i.e., $A = d(X)$. The value function under $d(\cdot)$ is then defined as the expectation of $Y(d(X))$, i.e., $V(d) := \E\{Y(d(X))\}$. Let the propensity score be 
$$
\pi_a(x) = \pr(A = a \mid X = x),
$$ 
and the outcome mean function be 
$$
\mu_a(x) = \E\{Y(a) \mid X = x\}.
$$ 
Under Assumption \ref{assump:causal}, the value function $V(d)$ can be identified using observed data through inverse propensity score weighting (IPW): $V(d) = \E\left[Y \bone\{A = d(X)\} \pi_A^{-1}(X)\right]$. Suppose $\mathcal{D}$ is a class of ITRs of interest, such as linear ITRs \citep{zhang2012robust,zhao2012estimating,cheng2024inference} or tree-based ITRs \citep{zhang2015using,laber2015tree,athey2021policy}. The optimal ITR is defined as $d^{*}(X) := \argmax_{d \in \cD} V(d)$. The complexity of $\cD$ increases exponentially with $K$. Consequently, existing literature often assumes that $K$ is fixed \citep{zhou2023offline}. However, in practice, the treatment dimension $K$ may be high \citep{rashid2020high}, making the task of learning $d^{*}(X)$ significantly more challenging. 

To address this challenge, a key insight is that certain treatments, such as those in drug development targeting similar disease symptoms and mechanisms, may yield comparable or identical outcomes \citep{ma2022learning,ma2023learning}. This observation suggests a group structure.
\begin{definition}[Oracle group structure]
$\cA = \cup_{\b=1}^M \Gs{\b}$, where $\Gs{\b}$'s are disjoint sets satisfying: (i) $\ma(X) = \mu_{a'}(X)$ for $a, a^\prime \in \Gs{\b}$ and (ii) $\ma(X) \neq \mu_{a'}(X)$ for $a \in \Gs{\b}$, $a^\prime \in \Gs{\b^\prime}$ with $\b \neq \b^\prime$.  
\end{definition}
\begin{remark}
We define the oracle group structure by exact equality of $\ma(X)$ to ensure identifiability and interpretability. While seemingly restrictive, this serves as a natural basis for grouping similar treatments and enables formal guarantees. In practice, the fused lasso penalization we use allows for grouping treatments with approximately equal effects by tolerating small differences due to sampling variability. The explicit gap tolerance required for recovery is provided in Remark \ref{rm:pen}. Overall, when no exact group structure exists, fusion trades variance for bias, and this trade-off can improve performance when data are limited.
\end{remark}
Motivated by this latent structure, we can first learn a group mapping  
$
\delta: \cA \to \cB := \{1, 2, \ldots, M\},
$
and subsequently learn the grouped ITR  
$
\dB(\cdot): \cX \to \cB
$
by using established multi-armed policy learning methods. Since $M$ is smaller than $K$ after grouping, learning the grouped optimal ITR $\dBs:= \argmax_{\dB \in \cDB} V(d)$ becomes more efficient, where $\cDB$ is a class of ITRs $\dB$. After obtaining $\hat{d}_{\mathcal{B}}$, for any $X$ such that $\hat{d}_{\mathcal{B}}(X) = b$, we define $\hat{d}(X)$ as randomly selecting one $a$ such that $a \in \Gs{b}$.

Therefore, the primary objective of this paper is to recover the true group structure $\cup_{\b=1}^M \Gs{\b}$ through data-driven fusion. This task is particularly challenging due to the sparsity of data within treatment groups caused by a large $K$, which hinders the accurate estimation of $\ma(X)$ using complex models beyond linear ones. Additionally, some treatment groups have very low propensity scores, leading to high variability in conventional inverse propensity score balancing methods. Both model misspecification and covariate shifts can introduce bias, resulting in poor treatment fusion and suboptimal policy learning. 
To address these challenges, we propose a calibration weighting method that robustly balances covariates across treatments. Since $\ma(x) = \mu_{a'}(x)$ implies that the linear projections of $Y(a)$ and $Y(a^\prime)$ are equal, we employ a penalized linear working model to perform treatment fusion, achieving double robustness when combined with calibration weighting. After treatment grouping, any state-of-the-art ITR learning method, such as policy trees, can be adopted, allowing for flexible outcome modeling and superior policy learning.

\section{Methodology}
\label{sec:met}

\subsection{Calibration-Weighted Treatment Fusion}

Since the number of groups $M$ and the group structure $\cup_{\b=1}^M \Gs{\b}$ are both unknown, we employ fused Lasso to jointly determine $M$ and the partition. Specifically, we consider the following working linear model:  
\begin{align*}
Y=&M_0(X)+\suma\bone(A=a)X^{\T}{\bz}_{a}+\epsilon, \\
\text{s.t.}& \quad \suma\bone(A=a)X^{\T}{\bz}_{a}=0,
\end{align*}
where the redundant function $M_0(X)$ represents the main effect of treatments, and $X^{\T} \bz_{a}$ captures the interaction effect between treatment $a$ and the covariates. The sum-to-zero constraint on the interaction terms ensures the identifiability of the regression function.  
The main effect function $M_0(X)$ can be estimated using weighted parametric or nonparametric regression models. As this is not the focus of our paper, we assume it has been accurately estimated and define the transformed outcome as $\tY = Y - M_0(X)$.

To estimate and group $\bz_a$'s, we consider the following optimization problem, which imposes a pairwise fusion penalty on each pair of treatment-specific parameters:  
\begin{align}\label{eq:obj1}
\min _{\bz}\bigg\{  
&\frac{1}{n}\suma\sumia\mathcal{L}\left(\tY_i, X_i^{\T} \bz_{a}\right) \nonumber\\  
&+ \sum_{1 \leqslant a < a^\prime \leqslant K} p_{\lambda_{n}}\left(\left\|\bz_{a} - \bz_{a^\prime}\right\|_{1}\right)  
\bigg\},  
\end{align}  
where $\mathcal{L}(\cdot, \cdot)$ is a prespecified loss function that measures the goodness of fit, $\|\cdot\|_{1}$ denotes the $\ell_1$ norm of a vector, $p_{\lambda_{n}}$ is a penalty function that encourages the fusion of $\widehat{\bz}_{a}$'s into groups, and $\lambda_{n}$ is the tuning parameter, which can be selected by multiple model selection criteria, such as  Bayesian information criterion (BIC) \citep{schwarz1978estimating} or extended BIC (EBIC) \citep{chen2008extended}.


However, the applicability of the objective function (\ref{eq:obj1}) is limited to scenarios where either (i) the true outcome function $\ma(x)$ is linear or (ii) the covariate distributions are identical across all treatment groups. In cases where these conditions do not hold, the estimated $\bz_a$'s obtained from (\ref{eq:obj1}) may deviate from the true group structure $\cup_{\b=1}^M \Gs{\b}$.
In observational studies, variations in covariate distributions across different treatment groups often prevent these groups from accurately representing the entire population. To mitigate this issue, a preliminary step involves reweighting the samples within each treatment group so that the weighted samples better reflect the overall population. To achieve this, we propose the following calibration weighting approach.

For each treatment group $a$, we aim to assign weights $\{w_i : A_i = a\}$ to calibrate the covariate distribution of the group to match the overall sample mean $\bar{X}$. This is achieved by solving the following optimization problem for each $a \in \cA$:  
\begin{align}\label{eq:cw}
\min\limits_{w_i, i: A_i = a} \sumia &h_\gamma\left(w_i\right), \nonumber\\
\text{s.t.}\quad \sumia w_i X_i = \bar{X}, &\quad \sumia w_i = 1,
\end{align}
where $h_\gamma(w)$ quantifies the discrepancy between the calibration weights and the uniform distribution ${n_a}^{-1}$, with $n_a$ denoting the sample size of treatment group $a$. The function $h_\gamma(w)$ can be chosen from the Cressie and Read family of discrepancies \citep{cressie1984multinomial}, defined as:  
\begin{align*}
\sumia h_\gamma\left(w_i\right) 
= \sumia \{\gamma(\gamma + 1)\}^{-1} \{(n_a w_i)^{\gamma + 1} - 1\}.
\end{align*}  
For example, minimizing $\sum_{i: A_i = a}h_{-1}(w_i)$ is equivalent to maximizing $\sum_{i: A_i = a}\log(w_i)$, leading to the maximum empirical log-likelihood objective function. Minimizing $\sum_{i: A_i = a}h_{0}(w_i)$ is equivalent to maximizing $-\sum_{i: A_i = a} w_i\log(w_i)$, leading to the maximum empirical exponential likelihood or entropy.


Let $\widehat{w}_i$ denote the calibrated weights solved by \eqref{eq:cw} for individual $i$. Using these weights, the calibrated objective function becomes:  
\begin{align}\label{eq:wloss}
\min _{\bz}\Bigg\{
&\frac{1}{n} \suma \sumia \widehat{w}_i \mathcal{L}\left(\tY_i, X_i^{\T} \bz_{a}\right)  \nonumber\\
&+ \sum_{1 \leqslant a < a^\prime \leqslant K} p_{\lambda_{n}}\left(\left\|\bz_{a} - \bz_{a^\prime}\right\|_{1}\right)  
\Bigg\}.
\end{align}  
The entire procedure is summarized in Algorithm \ref{alg:cw_fusion}, considering the least squares loss function as an example.

\begin{algorithm}[ht]
\caption{Calibration-Weighted Treatment Fusion}
\label{alg:cw_fusion}
\KwIn{Data $\{(X_i, A_i, Y_i)\}_{i=1}^n$.}

\For{$a = 1, \ldots, K$}{
Solve calibration weights $\hat{w_i}$ by optimizing: 
\begin{align*}\label{eq:cw}
\min\limits_{w_i, i: A_i = a} \sumia &h_\gamma\left(w_i\right), \\
\text{s.t.}\quad \sumia w_i X_i = \bar{X}, &\quad \sumia w_i = 1,
\end{align*}
}

Solve $\hbz$ by weighted fused Lasso:
$$
\min_{\bz=(\bz_{1}^\T,\ldots,\bz_{K}^\T)^\T\in\R^{Kp}}
\left\{
L_n(\bz)
+P_n(\bz)\right\},
$$
$$
L_n(\bz)=\frac1{2n} 
\suma\sumia \hw{i} 
\left(
\tY_i-X_i^\T\bz_a
\right)^2,
$$
$$
P_n(\bz)=\sum_{1 \leqslant a<a^\prime \leqslant K} p_{\lambda_{n}}\left(\left\|{\bz}_{a}-{\bz}_{a^\prime}\right\|_{1}\right).
$$

Forming groups ${\delta}(a)={\delta}(a^\prime)$ if $\hz{a}=\hz{a^\prime}$.

\KwOut{Group mapping ${\delta}: \cA\to\cB$.}

\end{algorithm}

\subsection{Double Robustness of Treatment Fusion}

\textbf{Theory roadmap.} This section provides theoretical guarantees for Algorithm~\ref{alg:cw_fusion}.
In Section~\ref{sec:rep}, we represent the oracle group structure $\cup_{\b=1}^M \Gs{\b}$ via the linear projection of potential outcomes onto the covariate space, without requiring the linear model to be correctly specified. Under the completeness Assumption~\ref{ass:comp}, recovering the oracle group structure is equivalent to recovering the projection vectors.
In Section~\ref{sec:con}, we establish the convergence of the oracle estimator for the projection vectors under doubly robust and regularity conditions (Theorem~\ref{thm-con}).
In Section~\ref{sec:ora}, we show that the oracle estimator is a local minimizer of the objective function in Algorithm~\ref{alg:cw_fusion} (Theorem~\ref{thm-ora}).
Taken together, these results imply that Algorithm~\ref{alg:cw_fusion} consistently recovers the oracle group structure. Technical clarifications are provided in remarks and can be skipped by readers less interested in such details.

\subsubsection{Representation of Oracle Group Structure}
\label{sec:rep}

For $a\in\cA$, let $\tY(a):=Y(a)-M_0(X)$ denote the transformed potential outcome.
We project $\tY(a)$ onto the linear space spanned by $X$ and denote the projection vector by
\begin{equation}
\label{eq:pop}
\zs{a}:=\argmin_{\bz\in\R^p}\E\left\{
\tY(a)-X^\T\bz
\right\}^2,
\end{equation}
where $X$ includes the intercept term.
Solving \eqref{eq:pop} yields:
\begin{equation}
\label{eq:pop2}
\E\left[X^\T\left\{
\tY(a)-X^\T\zs{a}
\right\}\right]=0.
\end{equation}
We define the projection residual by 
\begin{equation}
\label{eq:pop3}
\ep{a}:=\tY(a)-X^\T\zs{a}.
\end{equation}
By \eqref{eq:pop2} and \eqref{eq:pop3}, we have 
\begin{equation}
\label{eq:pop4}
\tY(a)=X^\T\zs{a}+\ep{a},\quad
\E\left\{X^\T\ep{a}\right\}=0.
\end{equation}
From the above derivation, condition \eqref{eq:pop4} does not assume a linear relationship between $\tY(a)$ and $X$; it holds solely due to the projection \eqref{eq:pop} and the definition of $\ep{a}$.
Condition \eqref{eq:pop4} alone is generally insufficient to guarantee the consistency of the estimated projection vector derived from the unweighted working linear model \eqref{eq:obj1}.
\citet{ma2022learning} assumes a linear relationship between $\tY(a)$ and $X$, which essentially imposes a stronger condition:  
\begin{equation}\label{eq:ma}
\tY(a) = X^\T\zs{a} + \ep{a}, \quad \E\{\ep{a} \mid X\} = 0.
\end{equation}  
In the following, we show that by using calibration-weighted treatment fusion \eqref{eq:wloss}, the consistency results hold if either the calibration weighting \eqref{eq:cw} is correctly specified or the outcome model \eqref{eq:ma} is correctly specified, but not necessarily both. Consequently, our approach provides a more robust fusion method against model misspecification.

Since $X$ includes the intercept term, and from \eqref{eq:pop4}, we have $\E\{\ep{a}\} = 0$. Therefore,
\begin{equation}
\label{eq:group}
a, a^\prime \in \Gs{\b}\;\;
\Leftrightarrow \;\;
\ma(X) = \mu_{a'}(X)
\;\;\Rightarrow \;\;
\zs{a}=\zs{a^\prime}.
\end{equation}
To ensure the reverse direction of \eqref{eq:group} holds, we impose the following assumption.
\begin{assumption}[Completeness]
\label{ass:comp}
For any function $h(\cdot)$, if $\mathbb{E}{Xh(X)} = 0$, then $h(X) = 0$ almost surely.
\end{assumption}
Under Assumption~\ref{ass:comp}, we have
$$
\zs{a}=\zs{a^\prime}
\;\;\Rightarrow \;\;
\ma(X) = \mu_{a'}(X),
$$
which ensures that identifying $\zs{a}$ recovers the oracle grouping.

We denote the \textit{group-shared projection vector} as $\gs{\b} := \zs{a}$, $\forall a\in\Gs{\b}$. The transformed potential outcome can then be expressed as:
\begin{equation}
    \label{eq:decom}
    \tY(a)=X^\T\zs{a}+\ep{a}=X^\T\gs{\b}+\ep{a}.
\end{equation}


If the true group structure $\cup_{\b=1}^M\Gs{\b}$ is known, the data within each group can be pooled to estimate the group-shared projection vector as
$$
\hg{\b} = 
\min_{\bg_m\in\R^{p}}
\frac1{2n} \sumn\sumam\Ia \hw{i}\left(
\tY_i-X_i^\T\bg_m
\right)^2.
$$
Then, the oracle estimator for the projection vector $\bzs = (\zs{1}^\T, \ldots, \zs{K}^\T)^\T$ can be obtained by expanding $\hg{\b}$, such that $\hbzo = (\hzoT{1}, \ldots, \hzoT{K})^\T$, where $\hzo{a} \equiv \hg{\b}$ for all $a \in \Gs{\b}$.
In practice, since the true group structure is unknown, the estimated projection vector $\hbz$ is obtained using Algorithm \ref{alg:cw_fusion}. Define the objective function as $Q_n(\bz) = L_n(\bz) + P_n(\bz)$.

\subsubsection{Consistency of Oracle Estimator}
\label{sec:con}

We establish the convergence of $\hbzo$ to $\bzs$ under the assumptions stated below.
Let $\Cwl$, $\Cwu$, $\Cxu$, and $\Cxl$ denote positive constants. 
We allow $K$, $M$, and $p$ to grow with $n$, omitting their dependence on $n$ for notational simplicity. We write $a_n \gg b_n$ to denote $b_n = o(a_n)$.

\begin{assumption}[Convergence of calibration weight]
\label{ass:w} 
$\forall i=1,\ldots,n$, $\hw{i}=\ws{i}+\Op(1/\sqrt{n})$ and $ \Cwl\leq\ws{i}\leq \Cwu$.
\end{assumption}

\begin{assumption}[Doubly robust model assumption]
\label{ass:model} 
One of the following conditions holds:
(i) (Correct calibration weighting) 
$\ws{i}=1/\pi_{A_i}(X_i)$, or
(ii) (Correct outcome model) $\E\{\ep{a} \mid X\} = \E\{\tY(a) - X^\T \zs{a} \mid X\} = 0$ for any $a \in \cA$.
\end{assumption}

\begin{remark}
Assumption \ref{ass:w} requires the $\sqrt{n}$-convergence of the working weights $\hw{i}$ to bounded limits $\ws{i}$, which are not necessarily the true inverse propensity scores. This typically holds under mild conditions for posited parametric models for weighting, such as the entropy balancing method (see Section \ref{sec:rat} for details). The convergence rate requirement for the weights in the fusion stage may be relaxed by using undersmoothed estimators or advanced doubly robust methods \citep{chambaz2012estimation, ertefaie2023nonparametric, bruns2025augmented}.
Assumption \ref{ass:model} requires only that either the calibration weighting model or the outcome model is correctly specified, highlighting the double robustness of our results.
\end{remark}

\begin{assumption}[Regularity condition for $X$]
\label{ass:X}
For any $j=1,\ldots,p$, $n^{-1}\sum_{i=1}^nX_{ij}^2\leq \Cxu$. For any $\b\in\cB$, $\Lambda_{\rm min}\left(\sum_{i:A_i\in\Gs{\b}} X_i X_i^\T \right)/\Nmin\geq \Cxl$, where $\Lambda_{\rm min}(\cdot)$ denote the smallest eigenvalue of a matrix, and $\Nmin:=\min_{\b\in\cB} \sum_{i=1}^n \bone\{A_i\in\Gs{\b}\}$ is the smallest sample size across groups. 
\end{assumption}

\begin{assumption}[Sub-Gaussian error]
\label{ass:e} 
For any $a \in \cA$, $\bep{a} := \big(\epi{a}{1}, \ldots, \epi{a}{n}\big)^\T$ has sub-Gaussian tails, that is, $\exists\sigep > 0$, for any $\boldsymbol{b} \in \R^n$ and $t > 0$, $\pr(|\boldsymbol{b}^\T \bep{a} - \E\{\boldsymbol{b}^\T \bep{a}\}| > \|\boldsymbol{b}\|_2 t) \leq 2\exp(-t^2 / 2\sigep^2)$.
\end{assumption}

\begin{remark}
Assumptions \ref{ass:X} and \ref{ass:e} are typical regularity conditions used in high-dimensional statistics \citep{wainwright2019high}. 
Notably, Assumption \ref{ass:e} only requires that $\boldsymbol{b}^\T \bep{a}$ concentrates around its expectation. If the outcome model is misspecified and there is covariate shift across treatments, $\E\{\boldsymbol{b}^\T \bep{a}\}$ may not equal zero when $b_i = \Ia X_{ij}$, leading to bias. However, by using calibration weighting with $b_i = \Ia \ws{i} X_{ij}$, we can robustly eliminate this bias if either the calibration weighting model or the outcome model is correctly specified, as shown in Lemma \ref{lm:E}.
\end{remark}

\begin{theorem}[Consistency of $\hbzo$]
\label{thm-con}
Suppose Assumptions \ref{assump:causal}, \ref{ass:w}, \ref{ass:model}, \ref{ass:X}, and \ref{ass:e} hold. If $Mp/n = o(1)$ and $\sqrt{p\,n\log(n)}/\Nmin = o(1)$, then for some constant $C > 0$, with probability at least $1 - 2Mp/n - \iota_n$ (where $\iota_n \to 0$ as $n \to \infty$), we have
$
\|\hbzo - \bzs\|_{\infty} 
\leq C \sqrt{p\,n\log(n)}/\Nmin.
$
\end{theorem}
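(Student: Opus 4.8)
The oracle estimator pools the observations within each known group, so it is an ordinary weighted least-squares fit with a closed form. Fixing a group $\b$ and setting $H_{\b}:=\sum_{i:A_i\in\Gs{\b}}\hw{i}X_iX_i^\T$ and $g_{\b}:=\sum_{i:A_i\in\Gs{\b}}\hw{i}X_i\epi{A_i}{i}$, the decomposition \eqref{eq:decom}, namely $\tY_i=X_i^\T\gs{\b}+\epi{A_i}{i}$ for $A_i\in\Gs{\b}$, yields $\hg{\b}-\gs{\b}=H_{\b}^{-1}g_{\b}$. Because $\hzo{a}\equiv\hg{\b}$ and $\zs{a}=\gs{\b}$ on $\Gs{\b}$, the plan is to use $\|\hbzo-\bzs\|_{\infty}=\max_{\b\in\cB}\|\hg{\b}-\gs{\b}\|_{\infty}\leq\max_{\b\in\cB}\|H_{\b}^{-1}\|_{\rm op}\,\|g_{\b}\|_2$; it then suffices to lower bound $\Lambda_{\rm min}(H_{\b})$ and upper bound $\|g_{\b}\|_2$ uniformly, followed by a union bound over the $M$ groups.

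For the denominator, Assumption \ref{ass:w} gives $\hw{i}\geq\Cwl-\Op(1/\sqrt{n})\geq\Cwl/2$ uniformly in $i$ on a high-probability event, so $H_{\b}\succeq(\Cwl/2)\sum_{i:A_i\in\Gs{\b}}X_iX_i^\T$ in the positive-semidefinite order. Assumption \ref{ass:X} then yields $\Lambda_{\rm min}(H_{\b})\geq(\Cwl\Cxl/2)\Nmin$ and hence $\|H_{\b}^{-1}\|_{\rm op}\leq 2/(\Cwl\Cxl\Nmin)$ simultaneously for all $\b\in\cB$.

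For the numerator, I would first pass from the estimated weights to their limits, writing $g_{\b}=g_{\b}^*+(g_{\b}-g_{\b}^*)$ with $g_{\b}^*:=\sum_{i:A_i\in\Gs{\b}}\ws{i}X_i\epi{A_i}{i}$; the remainder is governed by $\hw{i}-\ws{i}=\Op(1/\sqrt{n})$ together with Assumptions \ref{ass:X} and \ref{ass:e} and is of smaller order. For the leading term, the $j$-th coordinate decomposes as $g_{\b,j}^*=\sum_{a\in\Gs{\b}}(\boldsymbol{b}^{(a,j)})^\T\bep{a}$ with $\boldsymbol{b}^{(a,j)}_i=\Ia\ws{i}X_{ij}$. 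Conditionally on $(X,A)$ the pieces are independent across $a$ since the index sets $\{i:A_i=a\}$ are disjoint, so $g_{\b,j}^*$ is sub-Gaussian with variance proxy $\sigep^2\sum_{i:A_i\in\Gs{\b}}(\ws{i})^2X_{ij}^2\leq\Cwu^2\Cxu\sigep^2n$ by Assumptions \ref{ass:w} and \ref{ass:X}. Crucially, Lemma \ref{lm:E} shows the centering $\E\{g_{\b,j}^*\}$ vanishes under the doubly robust Assumption \ref{ass:model}, so Assumption \ref{ass:e} with $t\asymp\sigep\sqrt{\log n}$ gives $|g_{\b,j}^*|\lesssim\sqrt{n\log n}$ outside an event of probability $2/n$. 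A union bound over the $Mp$ pairs $(\b,j)$ produces the stated factor $2Mp/n$ and the bound $\|g_{\b}^*\|_2\leq\sqrt{p}\,\max_j|g_{\b,j}^*|\lesssim\sqrt{pn\log n}$; combining with the denominator gives $\|\hbzo-\bzs\|_{\infty}\lesssim\sqrt{pn\log n}/\Nmin$, and the hypotheses $Mp/n=o(1)$ and $\sqrt{pn\log n}/\Nmin=o(1)$ make both the bound and the failure probabilities negligible.

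The main obstacle is the centering step and its interplay with the double robustness. The sub-Gaussian inequality of Assumption \ref{ass:e} controls fluctuations about the conditional mean $\E\{g_{\b,j}^*\mid X,A\}=\sum_{i:A_i\in\Gs{\b}}\ws{i}X_{ij}\E\{\epi{A_i}{i}\mid X_i\}$, which is identically zero under the correct-outcome case (ii) of Assumption \ref{ass:model} but generally nonzero under the correct-weighting case (i). Under case (i) I would instead argue that this conditional mean is itself a mean-zero i.i.d. sum—exploiting $\ws{i}=1/\pi_{A_i}(X_i)$, unconfoundedness, and the population normal equation \eqref{eq:pop2}—and so concentrates at the same $\sqrt{n\log n}$ scale, ensuring both regimes feed into the identical rate; the extra probability from this auxiliary concentration, together with that of the weight-approximation event, is absorbed into $\iota_n\to0$. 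A related subtlety is that $\boldsymbol{b}^{(a,j)}$ depends on $(X,A)$, so Assumption \ref{ass:e} must be invoked conditionally on $(X,A)$ throughout.
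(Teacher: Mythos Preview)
Your proposal is correct and follows essentially the same route as the paper: the same least-squares identity $\hg{\b}-\gs{\b}=H_\b^{-1}g_\b$, the same eigenvalue lower bound on $H_\b$ from Assumptions~\ref{ass:w} and~\ref{ass:X}, the same coordinatewise sub-Gaussian concentration for $g_\b^*$ via Lemma~\ref{lm:E} and Assumption~\ref{ass:e} with the $2Mp/n$ union bound, and the same $\Op(1/\sqrt{n})$ handling of $\hw{i}-\ws{i}$. The only cosmetic difference is where the factor $\sqrt{p}$ is placed---you use $\|H_\b^{-1}g_\b\|_\infty\le\|H_\b^{-1}\|_{\rm op}\|g_\b\|_2$ with $\|g_\b\|_2\le\sqrt{p}\,\|g_\b\|_\infty$, whereas the paper uses $\|H_\b^{-1}g_\b\|_\infty\le\|H_\b^{-1}\|_\infty\|g_\b\|_\infty$ with $\|H_\b^{-1}\|_\infty\le\sqrt{p}\,\|H_\b^{-1}\|_2$---and your discussion of the conditional-versus-unconditional centering under case~(i) of Assumption~\ref{ass:model} is more cautious than the paper, which simply invokes Lemma~\ref{lm:E} together with Assumption~\ref{ass:e} without separating the two cases.
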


\begin{remark}
To ensure the consistency of $\hbzo$, it is required that $\sqrt{p\,n\log(n)} \ll \Nmin \leq n/M$, which implies that the number of groups must satisfy $M = o\textit{}\left(\sqrt{n / \{p\log(n)\}}\right)$.
\end{remark}

\subsubsection{Oracle Property of $\hbz$}
\label{sec:ora}

Next, we establish the oracle property of $\hbz$.
To encourage the grouping of similar projection vectors and reduce bias introduced by the penalty, we require the penalty function $p_{\lambda_n}(\cdot) := \lambda_n \rho(\cdot)$ to have a sharp derivative near 0. 
Moreover, the $\ell_\infty$ distances between the projection vectors of two different groups must be sufficiently large to ensure they can be separated. 
Thus, we impose the following regularity condition, which is commonly used in high-dimensional statistics \citep[e.g.,][]{ma2017concave}.

\begin{assumption}[Penalty function]
\label{ass:penalty} 
The penalty function $p_{\lambda_n}(\cdot)=\lambda_n\rho(\cdot)$ is symmetric about $0$, satisfies $p_{\lambda_n}(0) = 0$, is differentiable near $0$ with $\rho^\prime(t)$ continuous except at finitely many $t$ and $\rho^\prime(0+) = 1$, and becomes constant for $t \geq c\lambda_n/2$ for some $c > 0$. Additionally, $\min_{\b \neq \b^\prime} \|\gs{\b} - \gs{\b^\prime}\|_\infty/c>\lambda_n \gg \phi_n + p\,\phi_n/\Kmin + \sqrt{n\log(n)}/\Kmin$, where $\phi_n:=C\sqrt{p\,n\log(n)}/\Nmin$, $C$ is the constant in Theorem \ref{thm-con}, and $\Kmin:=\min_{\b\in\cB}|\Gs{\b}|$ is the smallest number of treatments across groups.  
\end{assumption}

\begin{remark}
\label{rm:pen}
When (i) the covariate dimension $p$ and the number of groups $M$ are fixed, and (ii) $\Nmin = \eta_1 n/M$ and $\Kmin = \eta_2 K/M$, where $\eta_1, \eta_2 \in (0, 1]$ are fixed constants, the term $\phi_n + p\,\phi_n/\Kmin + \sqrt{n\log(n)}/\Kmin$ is of order $O(\sqrt{n\log(n)}/K)$. In this simplified case, it suffices to assume that $\min_{\b \neq \b^\prime} \|\gs{\b} - \gs{\b^\prime}\|_\infty / c > \lambda_n \gg \sqrt{n\log(n)}/K$.
\end{remark}



\begin{theorem}[Oracle property of $\hbz$]
\label{thm-ora}
Suppose the conditions in Theorem \ref{thm-con} and Assumption \ref{ass:penalty} are satisfied. If $Kp/n=o(1)$, there exists a local minimizer $\hbz$ of the objective function $Q_n(\bz)$ such that $\pr(\hbz = \hbzo) \to 1$.
\end{theorem}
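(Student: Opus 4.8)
\emph{Strategy.} The plan is to show that, on an event of probability tending to one, the oracle estimator $\hbzo$ is itself a strict local minimizer of $Q_n(\bz)=L_n(\bz)+P_n(\bz)$; setting $\hbz=\hbzo$ on that event then yields $\pr(\hbz=\hbzo)\to1$. I would verify directly that $Q_n(\hbzo+\boldsymbol{u})>Q_n(\hbzo)$ for every $\boldsymbol{u}\neq0$ with $\|\boldsymbol{u}\|_\infty\le r$, where $r=c_0\lambda_n$ for a small constant $c_0$. Writing $\mathcal{M}=\{\bz:\bz_a=\bz_{a'}\text{ whenever }a,a'\in\Gs{\b}\}$ for the oracle subspace, I decompose each block as $\boldsymbol{u}_a=\bar{\boldsymbol{u}}_{\b}+\bv_a$, with $\bar{\boldsymbol{u}}_{\b}=|\Gs{\b}|^{-1}\sum_{a\in\Gs{\b}}\boldsymbol{u}_a$ the within-group mean and $\bv_a$ the deviation satisfying $\sum_{a\in\Gs{\b}}\bv_a=0$; thus $\bar{\boldsymbol{u}}$ moves along $\mathcal{M}$ and $\bv$ moves off it.

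\emph{Reduction of the objective.} Because $\|\hzo{a}-\hzo{a'}\|_\infty\ge\min_{\b\ne\b'}\|\gs{\b}-\gs{\b'}\|_\infty-2\phi_n>c\lambda_n-2\phi_n$ for $a,a'$ in distinct groups (Theorem~\ref{thm-con} and Assumption~\ref{ass:penalty}) and $\lambda_n\gg\phi_n$, every between-group difference stays above $c\lambda_n/2$ throughout the ball of radius $r$, so $P_n$ is locally constant in those coordinates. For within-group pairs $\hzo{a}=\hzo{a'}$, the penalty argument equals $\|\bv_a-\bv_{a'}\|_1$, which lies in the rising part of the penalty; by $\rho'(0+)=1$ and continuity of $\rho'$, taking $c_0$ small gives $p_{\lambda_n}(\|\bv_a-\bv_{a'}\|_1)\ge(1-\epsilon)\lambda_n\|\bv_a-\bv_{a'}\|_1$. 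Expanding the quadratic $L_n$ gives $L_n(\hbzo+\boldsymbol{u})-L_n(\hbzo)=\sum_a\langle s_a,\boldsymbol{u}_a\rangle+\tfrac12\boldsymbol{u}^\T H\boldsymbol{u}$, where $s_a=-n^{-1}\sumia\hw{i}X_i(\tY_i-X_i^\T\hzo{a})$ and $H\succeq0$ is block diagonal with blocks $n^{-1}\sumia\hw{i}X_iX_i^\T$. The pooled normal equations defining $\hg{\b}$ give $\sum_{a\in\Gs{\b}}s_a=0$, hence $\sum_a\langle s_a,\bar{\boldsymbol{u}}_{\b}\rangle=0$ and the linear term collapses to $\sum_a\langle s_a,\bv_a\rangle$.

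\emph{Two directions.} If $\bv=0$ but $\bar{\boldsymbol{u}}\ne0$, the change equals $\tfrac12\boldsymbol{u}^\T H\boldsymbol{u}>0$ by the eigenvalue bound in Assumption~\ref{ass:X} and $\ws{i}\ge\Cwl$. If $\bv\ne0$, dropping the nonnegative Hessian term,
\[
Q_n(\hbzo+\boldsymbol{u})-Q_n(\hbzo)\ge(1-\epsilon)\lambda_n\!\!\sum_{\substack{a<a'\\ \text{same group}}}\!\!\|\bv_a-\bv_{a'}\|_1-\max_a\|s_a\|_\infty\sum_a\|\bv_a\|_1 .
\]
The crux is a Poincar\'e-type inequality for zero-sum blocks: $\sum_{a<a',\,a,a'\in\Gs{\b}}\|\bv_a-\bv_{a'}\|_1\ge(|\Gs{\b}|/2)\sum_{a\in\Gs{\b}}\|\bv_a\|_1$, so the penalty term is at least $(1-\epsilon)\lambda_n(\Kmin/2)\sum_a\|\bv_a\|_1$. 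It remains to show $\max_a\|s_a\|_\infty<(1-\epsilon)\lambda_n\Kmin/2$. Splitting $s_a$ into the stochastic part $-n^{-1}\sumia\hw{i}X_i\epi{a}{i}$ and the bias part $n^{-1}\sumia\hw{i}X_iX_i^\T(\hg{\b}-\gs{\b})$, Assumption~\ref{ass:X} and Theorem~\ref{thm-con} bound the latter by $O(p\phi_n)$ (the factor $p$ coming from the $\ell_\infty\!\to\!\ell_\infty$ norm of the weighted Gram matrix), while Assumption~\ref{ass:e} together with Lemma~\ref{lm:E}---which, under Assumption~\ref{ass:model}, annihilates the mean of the stochastic part under either correct weighting or a correct outcome model---controls the former after a union bound over the $Kp$ coordinates. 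Dividing the threshold by $\Kmin$, the required condition becomes $\lambda_n\gg p\phi_n/\Kmin+\sqrt{n\log(n)}/\Kmin$, exactly the last two terms of Assumption~\ref{ass:penalty}, while the first term $\phi_n$ is precisely what guaranteed the between-group flatness used above; hence the displayed lower bound is strictly positive.

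\emph{Main obstacle.} The delicate step is the off-$\mathcal{M}$ bound: one must extract the factor $\Kmin$ from the fusion penalty via the Poincar\'e inequality and balance it against the per-treatment score $s_a$, whose bias carries the factor $p$ and whose fluctuation must be controlled uniformly over all $K$ blocks---this is where $Kp/n=o(1)$ and Assumption~\ref{ass:penalty} enter, and where the double-robustness of Lemma~\ref{lm:E} is essential to keep $\E\{\bep{a}\}$-type terms from contributing bias. Since $Q_n$ is nonconvex, I establish local (not global) minimality directly through the perturbation inequality; combining the two directions gives $Q_n(\hbzo+\boldsymbol{u})>Q_n(\hbzo)$ for all $0<\|\boldsymbol{u}\|_\infty\le r$, so $\hbzo$ is a strict local minimizer on an event of probability $1-2Mp/n-\iota_n-o(1)\to1$, which completes the proof.
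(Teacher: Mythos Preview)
Your proposal is correct and follows essentially the same route as the paper: show the between-group penalty is locally flat, show the within-group penalty dominates the score in the off-subspace direction, and conclude that $\hbzo$ is a local minimizer on a high-probability event. The paper packages this as two statements via the projection $T:\bz\mapsto\bzt$ onto the oracle subspace (Statement~1: $Q_n(T(\bz))>Q_n(\hbzo)$ by unique minimization of $L_n$ on $\Zor$; Statement~2: $Q_n(\bz)\ge Q_n(T(\bz))$), which is exactly your $\bar{\boldsymbol{u}}/\bv$ decomposition since $T(\hbzo+\boldsymbol{u})=\hbzo+\bar{\boldsymbol{u}}$. The one genuine technical difference is how the factor $\Kmin$ is extracted: you use the Poincar\'e-type inequality $\sum_{a<a'\in\Gs{\b}}\|\bv_a-\bv_{a'}\|_1\ge(|\Gs{\b}|/2)\sum_{a\in\Gs{\b}}\|\bv_a\|_1$ to amplify the penalty, whereas the paper uses the algebraic identity $\sum_{a\in\Gs{\b}}\bv_a^\T(\bz_a-\bzt_a)=|\Gs{\b}|^{-1}\sum_{a<a'\in\Gs{\b}}(\bv_a-\bv_{a'})^\T(\bz_a-\bz_{a'})$ to shrink the loss term and then compares pairwise with the penalty. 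Both devices are equivalent up to Hölder and yield the identical rate requirement $\lambda_n\gg p\phi_n/\Kmin+\sqrt{n\log(n)}/\Kmin$; your Poincar\'e route is arguably cleaner because it avoids the intermediate Taylor point $\bzm$ and makes the role of the normal equations $\sum_{a\in\Gs{\b}}s_a=0$ explicit.
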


Combining Theorems \ref{thm-con} and \ref{thm-ora}, we demonstrate that minimizing $Q_n(\bz)$ facilitates the recovery of $\bzs$, which indicates the group structure $\cup_{\b=1}^M \Gs{\b}$ under the completeness Assumption \ref{ass:comp}. 

\subsection{Multi-armed Policy Learning}

\begin{table*}[ht]
\centering
\renewcommand{\arraystretch}{1.5}
\caption{Summary of definitions for the CAIPWL and policy tree.}
\label{tab:policy_tree}
\begin{tabular}{p{5.25cm}p{2.7cm}p{7.96cm}}
\hline
Description & Notation & Definition\\
\hline
Hamming Distance on $\{x_i\}_{i=1}^n \subset \cX$ & $H(\dB_1, \dB_2;\{x_i\}_{i=1}^n)$ & $n^{-1} \sum_{i=1}^n \bone\{\dB_1(x_i) \neq \dB_2(x_i)\}$. \\

$\epsilon$-Hamming Covering Number on $\{x_i\}_{i=1}^n \subset \cX$ &$N_H(\epsilon, \cDB, \{x_i\}_{i=1}^n)$& The smallest number $L$ of policies ${\dB_1, \ldots, \dB_L}$ in $\cDB$, such that $\forall \dB \in \cDB$, $\exists \dB_i \in \cDB$, $H(\dB, \dB_i;\{x_i\}_{i=1}^n) \leq \epsilon$. \\

$\epsilon$-Hamming Covering Number &$N_H(\epsilon, \cDB)$& $\sup\{N_H(\epsilon, \cDB, \{x_i\}_{i=1}^m): m \geq 1, \{x_i\}_{i=1}^m \subset \cX\}$. \\

Entropy Integral &$\kappa(\cDB)$& $ \int_0^1 \sqrt{\log N_H(\epsilon^2, \cDB)} d\epsilon$. \\

Policy value of $\dB$  &$\phi(\dB)$& $\bone\{B = \dB(X)\} \frac{Y - {\mu}_{B}(X)}{{\pi}_{B}(X)} + {\mu}_{\dB(X)}(X)$. \\

Worst-case variance in evaluating the difference between two policies in $\cDB$ & $V_*$ & $\sup_{\dB_1, \dB_2 \in \cDB} \E \{\phi(\dB_1) - \phi(\dB_2)\}^2$. \\

Regret Bound &$R(\hdB)$& $ \E\{ Y(\dBs(X)) \} - \E\{ Y(\hdB(X)) \}$. \\

Decision Tree Class  & $\cDB_{\rm tree}$& Set of all depth-$D$ trees. \\
\hline
\end{tabular}
\end{table*}

After fusing treatments, those within the same group can be treated as identical. This enables our doubly robust treatment fusion procedure to seamlessly integrate with any state-of-the-art multi-armed ITR learning method to identify the optimal grouped ITR $\dBs: \cX \to \cB$, where $\cB = \{1, 2, \ldots, M\}$ represents the set of group indices.

As a concrete example, we combine the doubly robust treatment fusion with the Cross-Fitted Augmented IPW Learning (CAIPWL) approach proposed by \citet{zhou2023offline}, employing policy trees as the specific policy class, and review its theoretical results.
CAIPWL involves three main steps. First, it estimates the nuisance functions 
$$
\pb(x):=\sumam\pa(x),\quad
\mb(x):=\ma(x), \forall a\in\Gs{\b},
$$ 
using $L$-fold cross-fitting. Next, to evaluate the value of a policy $\dB$, an augmented IPW (AIPW) estimator is used:
\begin{align}
\label{eq:aipw}
\hat{V}(\dB) = \frac1n \sumn \bigg\{ 
&\bone\{B_i=\dB(X_i)\}\frac{Y_i-\hat{\mu}_{B_i}^{-l(i)}(X_i)}{\hat{\pi}_{B_i}^{-l(i)}(X_i)}+\nonumber\\
&\hat{\mu}_{\dB(X_i)}^{-l(i)}(X_i)
\bigg\}.
\end{align}
Finally, $\hat{V}(\dB)$ is optimized over a specified policy class $\cDB$, such as a decision tree, to obtain the ITR estimator $\hdB$. The detailed procedure is outlined in Algorithm \ref{alg:cf_aipw}. Note that $\hat{\mu}_{B_i}^{-l(i)}(X_i)$ and $\hat{\pi}_{B_i}^{-l(i)}(X_i)$ denote the nuisance functions estimated using $L-1$ folds of data, excluding the $l(i)$-th fold containing the $i$-th unit.

\begin{algorithm}[ht]
\caption{Cross-Fitted AIPW Policy Learning}
\label{alg:cf_aipw}
\KwIn{Data $\{(X_i, A_i, Y_i)\}_{i=1}^n$; Group mapping ${\delta}$.}

Mapping the treatment into groups $B_i={\delta}(A_i)$.

Split the data into $L$ folds.

\For{$l = 1, \ldots, L$}{
\For{$b = 1, \ldots, M$}{
    Fit $\hat{\pi}_b^{-l}(x)$ using rest $L-1$ folds.
    
    Fit $\hat{\mu}_b^{-l}(x)$ using rest $L-1$ folds.
}}

Compute the estimated value of a policy $\dB$ by \eqref{eq:aipw}.

Solving $\hdB=\argmax_{\dB\in\cDB}\hat{V}(\dB)$.

\KwOut{Optimal policy $\hdB$.}

\end{algorithm}

Consider $\cDB = \cDB_{\rm tree}$ in Algorithm \ref{alg:cf_aipw}, where depth-$D$ trees serve as candidate policies $\dB(\cdot): \cX \to \cB$. A depth-$D$ tree maps a covariate vector $X = (X^1, \ldots, X^p)\in\cX$ into an action $b \in \cB$ by traversing $D-1$ branch layers followed by a final layer of leaf nodes. Each branch node splits on a covariate $X^j$ at a threshold $l$, directing traversal to the left child if $X^j < l$ and to the right child otherwise. The traversal ends at a leaf node, assigned one of $M$ actions in $\cB$, partitioning $\cX$ into up to $2^D$ disjoint regions, each associated with an action $b$. Figure \ref{fig:res2} shows a depth-$5$ tree learned from a real data application.

Notably, the covariates used for node splits may be a smaller subset of those used for treatment fusion in Algorithm \ref{alg:cw_fusion} and for estimating nuisance functions $\mb(X)$ and $\pb(X)$. While treatment fusion and nuisance function estimation employ a richer covariate set to ensure correct model specification, the decision function $\dB$ prioritizes a smaller, interpretable subset of covariates that are actionable and relevant for decision-making. This separation highlights the robustness and flexibility of our procedure, ensuring the resulting policy is both interpretable and practical for implementation.

We provide the $\sqrt{n}$-regret bound for CAIPWL, relying on the rate doubly robust model, policy class complexity, and bounded outcome and covariate assumptions. Table \ref{tab:policy_tree} is a summary of definitions relevant to the theoretical result.

\begin{assumption}[Rate doubly robust model assumption]
\label{ass:rate_dr}
For any $b \in \cB$, $l = 1, \ldots, L$, $\hat{\pi}_b^{-l}(X)$ and $\hat{\mu}_b^{-l}(X)$ satisfy:
$$
\E\left\{
\hat{\pi}_b^{-l}(X)-{\pi}_b(X)
\right\}^2\to0,\;
\E\left\{
\hat{\mu}_b^{-l}(X)-{\mu}_b(X)
\right\}^2\to0,
$$
$$
\E\left\{
\hat{\pi}_b^{-l}(X)-{\pi}_b(X)
\right\}^2\E\left\{
\hat{\mu}_b^{-l}(X)-{\mu}_b(X)
\right\}^2=o(n^{-1}).
$$
\end{assumption}

\begin{assumption}[Complexity of the policy class] 
\label{ass:policy_class}
$\forall 0 < \epsilon < 1$, $N_H(\epsilon^2, \cDB) \leq C_1 \exp(C_2 \epsilon^{-\omega})$ for some constants $C_1, C_2 > 0$, $0 < \omega < 0.5$. \end{assumption}

\begin{assumption}[Bounded outcome and covariate] 
\label{ass:bounded}
For all $a \in \cA$, $Y(a)$ is bounded, and $X$ is bounded. \end{assumption}

\begin{remark}
Assumption \ref{ass:rate_dr} is weaker than the standard doubly robust model assumption, which requires either the estimator of ${\pi}_b(X)$ or ${\mu}_b(X)$ to be $\sqrt{n}$-consistent. Instead, Assumption \ref{ass:rate_dr} permits a trade-off between their accuracies, requiring only that the product of their error terms scales as $o(n^{-1})$. Modern machine learning methods offer effective estimators for these quantities.
Assumption \ref{ass:policy_class} requires the logarithm of the policy class covering number to grow at a low-order polynomial rate with $1/\epsilon$, a condition satisfied by the finite-depth trees considered here \citep{zhou2023offline}.
Assumption \ref{ass:bounded}, required only for the results in Section 3.3, is a standard regularity condition in the policy learning literature.
\end{remark}

\begin{proposition}[Regret bound of CAIPWL]
Under Assumptions \ref{assump:causal}, \ref{ass:rate_dr}, \ref{ass:policy_class}, and \ref{ass:bounded}, suppose that $\ma(X) = \mu_{a^\prime}(X)$ for all $\delta(a) = \delta(a^\prime)$. For the $\hdB$ learned from Algorithm \ref{alg:cf_aipw}, we have 
$
R(\hdB) = \Op\left(\kappa(\cDB) \sqrt{V_*/n}\right).
$
\end{proposition}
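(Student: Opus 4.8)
The plan is to reduce the grouped policy-learning problem to a standard $M$-armed offline policy-learning problem and then follow the regret analysis of \citet{zhou2023offline}. First I would verify that the grouped nuisance functions are well defined under the maintained hypothesis $\ma(X)=\mu_{a'}(X)$ whenever $\delta(a)=\delta(a')$. Setting $B=\delta(A)$, we have $\pb(X)=\sumam\pa(X)=\pr(B=b\mid X)$ and $\mb(X)=\ma(X)$ for any $a\in\Gs{\b}$, so $\mb(X)=\E\{Y\mid B=b,X\}$ is unambiguous. Because all treatments within a group share the same conditional mean, the value $V(\dB)=\E\{\mu_{\dB(X)}(X)\}$ does not depend on which specific treatment is drawn within the selected group, so the within-group tie-breaking used to define $\hat d$ leaves the value unchanged. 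Assumption~\ref{assump:causal} then supplies group-level unconfoundedness and positivity ($\pb(X)>0$ since it is a sum of positive component scores), which gives $\E\{\phi(\dB)\}=V(\dB)$ for every $\dB\in\cDB$.

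Next I would decompose the estimation error and use cross-fitting together with the rate double robustness to discard the plug-in contribution. Writing $\hat V(\dB)=\mathbb{P}_n\hat\Gamma(\dB)$ with $\hat\Gamma$ the cross-fitted AIPW scores of \eqref{eq:aipw} and $\Gamma$ their oracle counterparts built from the true $(\pb,\mb)$, so that $V(\dB)=\mathbb{P}\Gamma(\dB)$, the identity
\[
\begin{aligned}
\hat V(\dB)-V(\dB)={}&(\mathbb{P}_n-\mathbb{P})\Gamma(\dB)\\
&+(\mathbb{P}_n-\mathbb{P})\{\hat\Gamma(\dB)-\Gamma(\dB)\}\\
&+\mathbb{P}\{\hat\Gamma(\dB)-\Gamma(\dB)\}
\end{aligned}
\]
isolates the oracle empirical process from a bias term and a drift term. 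The Neyman orthogonality of the AIPW score makes the bias term second order, bounded by the product of $L_2$ nuisance errors $\|\hat\pi_b-\pi_b\|_2\,\|\hat\mu_b-\mu_b\|_2=o(n^{-1/2})$ under Assumption~\ref{ass:rate_dr}; cross-fitting makes the drift term conditionally mean-zero on each held-out fold with variance $o(n^{-1})$ because $\|\hat\Gamma(\dB)-\Gamma(\dB)\|_2\to0$. Summing over the $L$ folds keeps both remainders at $\op(n^{-1/2})$ uniformly in $\dB$, reducing the problem to the oracle process indexed by $\cDB$.

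I would then bound the oracle empirical process by a chaining argument. The increment $\Gamma(\dB_1)-\Gamma(\dB_2)$ is supported on the disagreement set $\{\dB_1(X)\neq\dB_2(X)\}$, so its $L_2$ norm is controlled jointly by the Hamming distance $H(\dB_1,\dB_2)$ and the worst-case variance $V_*$; the boundedness of $Y(a)$ and $X$ in Assumption~\ref{ass:bounded} furnishes the sub-Gaussian increments and envelope. Feeding the Hamming covering numbers $N_H(\epsilon^2,\cDB)$ into Dudley's entropy integral, whose convergence is guaranteed by the low-order polynomial growth in Assumption~\ref{ass:policy_class}, yields $\E\sup_{\dB_1,\dB_2\in\cDB}|(\mathbb{P}_n-\mathbb{P})\{\Gamma(\dB_1)-\Gamma(\dB_2)\}|=O(\kappa(\cDB)\sqrt{V_*/n})$. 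Combining this with the basic inequality $R(\hdB)\leq\sup_{\dB_1,\dB_2\in\cDB}|\{\hat V(\dB_1)-\hat V(\dB_2)\}-\{V(\dB_1)-V(\dB_2)\}|$, which follows from the optimality $\hat V(\hdB)\geq\hat V(\dBs)$, and absorbing the $\op(n^{-1/2})$ plug-in remainder gives $R(\hdB)=\Op(\kappa(\cDB)\sqrt{V_*/n})$.

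The step I expect to be the main obstacle is the \emph{uniform} control of the drift term $(\mathbb{P}_n-\mathbb{P})\{\hat\Gamma(\dB)-\Gamma(\dB)\}$ over the entire class $\cDB$ rather than for a fixed policy, since the pointwise vanishing-variance argument must be upgraded to a supremum via a discretization tied to the same Hamming covering numbers. Relative to the single-treatment analysis, the grouped setting also requires verifying that $\pb(X)$ inherits a positive lower bound from the component positivity and that the group outcome model $\mb$ remains correctly identified under the within-group tie-breaking; once these reductions are secured, the remainder of the argument tracks \citet{zhou2023offline} closely.
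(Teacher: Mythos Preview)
The paper does not supply its own proof of this proposition. It is presented explicitly as a review of the theoretical results of \citet{zhou2023offline}: the surrounding text states that the authors ``combine the doubly robust treatment fusion with the Cross-Fitted Augmented IPW Learning (CAIPWL) approach proposed by \citet{zhou2023offline} \ldots\ and review its theoretical results,'' and the appendix contains proofs only for Lemma~\ref{lm:E} and Theorems~\ref{thm-con}--\ref{thm-ora}. The proposition is therefore meant to be read as a direct invocation of \citet{zhou2023offline}'s regret bound once the grouped problem is cast as an $M$-armed problem, with the extra hypothesis $\ma(X)=\mu_{a'}(X)$ whenever $\delta(a)=\delta(a')$ supplying exactly the reduction needed.

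Your proposal is consistent with this intended reading: you first check that under the group-coherence hypothesis the grouped nuisances $\pb,\mb$ are well defined, that positivity and unconfoundedness transfer to $B=\delta(A)$, and that the value is invariant to within-group tie-breaking; you then reproduce the \citet{zhou2023offline} argument (AIPW decomposition, Neyman orthogonality plus cross-fitting for the plug-in remainder, Dudley chaining under the Hamming covering numbers for the oracle process, and the basic inequality). That is precisely the route the paper implicitly relies on, so your plan is correct and there is nothing substantive to contrast.
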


\begin{proposition}[Regret bound of policy tree]
Under Assumptions \ref{assump:causal}, \ref{ass:rate_dr}, and \ref{ass:bounded}, suppose that $\ma(X) = \mu_{a^\prime}(X)$ for all $\delta(a) = \delta(a^\prime)$. For the $\hdB$ learned from Algorithm \ref{alg:cf_aipw} with $\cDB = \cDB_{\rm tree}$, we have $R(\hdB)= \Op\big(
\big\{
\sqrt{(2^D - 1) \log p + 2^D \log M} + \frac{4}{3}D^{\frac{1}{4}} \sqrt{2^D - 1}
\big\}
\sqrt{V_*/n}
\big)$.
\end{proposition}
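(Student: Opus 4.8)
The plan is to read this as a specialization of the general CAIPWL regret bound established just above to the depth-$D$ tree class $\cDB_{\rm tree}$, so that the whole statement reduces to an explicit bound on the entropy integral $\kappa(\cDB_{\rm tree})$. Under the hypothesis $\ma(X)=\mu_{a'}(X)$ whenever $\delta(a)=\delta(a')$, the fused problem is a genuine $M$-armed offline policy learning problem: the group nuisances $\pb(X)=\sumam\pa(X)$ and $\mb(X)$ are well defined, $Y(\dB(X))$ is unambiguous, and Assumption \ref{ass:bounded} guarantees $V_*<\infty$ and well-behaved AIPW scores. Consequently, once I check that $\cDB_{\rm tree}$ meets the complexity condition of Assumption \ref{ass:policy_class}, the preceding proposition applies verbatim and yields $R(\hdB)=\Op\big(\kappa(\cDB_{\rm tree})\sqrt{V_*/n}\big)$. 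That check is immediate, because the tree covering number bounded below is sub-exponential in $1/\epsilon$ and hence dominated by $C_1\exp(C_2\epsilon^{-\omega})$ for any $\omega\in(0,1/2)$.

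The core step is bounding $\kappa(\cDB_{\rm tree})=\int_0^1\sqrt{\log N_H(\epsilon^2,\cDB_{\rm tree})}\,d\epsilon$. I would first decompose a depth-$D$ tree into its $2^D-1$ internal splits, each choosing one of $p$ covariates together with a threshold, and its $2^D$ leaves, each carrying one of $M$ labels in $\cB$. The leaf labels contribute a factor $M^{2^D}$ and the covariate indices a factor $p^{2^D-1}$ to the covering number, both independent of $\epsilon$, accounting for the term $(2^D-1)\log p+2^D\log M$ inside the logarithm. For the thresholds I would use that, on any fixed sample, the splits of one node on a fixed covariate form a totally ordered family of partitions, so an $\epsilon^2$-Hamming net needs only controllably many thresholds per node; propagating these per-node nets through the tree then bounds the $\epsilon$-dependent part of $\log N_H(\epsilon^2,\cDB_{\rm tree})$.

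It then remains to integrate. The $\epsilon$-independent part is handled by $\int_0^1 d\epsilon=1$ together with subadditivity of the square root, producing $\sqrt{(2^D-1)\log p+2^D\log M}$, while the threshold part integrates to $\tfrac43 D^{1/4}\sqrt{2^D-1}$; substituting into $R(\hdB)=\Op\big(\kappa(\cDB_{\rm tree})\sqrt{V_*/n}\big)$ gives the claim. The main obstacle is exactly this threshold term. The resolution needed to control the points rerouted at a node compounds with depth, so the $\epsilon$-dependent part of $\log N_H$ carries a factor growing like $\log(2^D-1)\approx D$; a single-scale count with a uniformly split Hamming budget then yields only the coarser $\sqrt{D}\,\sqrt{2^D-1}$ rate. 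Recovering the sharp $D^{1/4}$ dependence requires a scale-dependent (chaining) allocation of the covering budget across the $2^D-1$ nodes, as in the tree-entropy analysis of \citet{zhou2023offline}; reproducing that computation is the one genuinely technical ingredient, and the rest is bookkeeping on top of the already-established CAIPWL bound.
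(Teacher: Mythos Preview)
Your proposal is correct and matches the paper's own treatment: the paper does not give an independent proof of this proposition but presents it as a direct specialization of the general CAIPWL regret bound (the preceding proposition) to the tree class, with the entropy-integral computation for $\cDB_{\rm tree}$ deferred to \citet{zhou2023offline}. Your plan---verify Assumption~\ref{ass:policy_class} for depth-$D$ trees, invoke $R(\hdB)=\Op\big(\kappa(\cDB_{\rm tree})\sqrt{V_*/n}\big)$, and then bound $\kappa(\cDB_{\rm tree})$ by separating the $\epsilon$-independent combinatorial factors $p^{2^D-1}M^{2^D}$ from the $\epsilon$-dependent threshold contribution---is exactly that route, and your identification of the $D^{1/4}$ term as the only nontrivial piece (requiring the chaining argument of \citet{zhou2023offline}) is accurate.
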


\section{Numerical Experiments}


\subsection{Synthetic Scenarios\label{sec:simulation}}

We considered $M=4$ treatment groups, with the structure summarized in Table~\ref{tab:structure}. Each group comprises $|\Gs{b}| = 4$ treatments sharing identical outcome mean functions, as detailed in Table~\ref{tab:outcome_functions}.
We considered covariate distributions with shifts and varying sample sizes across treatment groups, as shown in Table \ref{tab:distribution}. The covariance matrices are:  
$$
\Sigma_1 = \begin{pmatrix}
1 & -0.25 \\
-0.25 & 1
\end{pmatrix}, \quad 
\Sigma_2 = \begin{pmatrix}
1 & -0.3 \\
-0.3 & 1
\end{pmatrix}.
$$

\begin{table}[ht]
\centering
\caption{Group structure.}
\label{tab:structure}
\scalebox{0.82}{
\begin{tabular}{ccccccc}      
\hline
Group & 1 & 2 & 3 & 4\\
\hline
Treatment & \{1,2,3,4\} & \{5,6,7,8\} & \{9,10,11,12\} & \{13,14,15,16\}\\
\hline
\end{tabular}
}
\end{table}

\begin{table}[ht]

\caption{Covariate distribution.}
\label{tab:distribution}

\scalebox{0.78}{
\begin{tabular}{ccccccc}      
\hline
Treatment & Covariate & Sample Size\\
\hline
\multirow{3}*{\{1,5,9,13\}} & $X_1\sim Bernoulli(0.3)$ & \multirow{3}*{150}\\
~ & $(X_2,X_3)^T|X_1=1\sim N((1,-1)^T,\Sigma_1)$ \\
~ & $(X_2,X_3)^T|X_1=0\sim N((-1,1)^T,\Sigma_2)$\\
\hline
\multirow{3}*{\{2,6,10,14\}} & $X_1\sim Bernoulli(0.4)$ & \multirow{3}*{125}\\
~ & $(X_2,X_3)^T|X_1=1\sim N((1,-1)^T,\Sigma_1)$ \\
~ & $(X_2,X_3)^T|X_1=0\sim N((-1,1)^T,\Sigma_2)$\\
\hline
\multirow{3}*{\{3,7,11,15\}} & $X_1\sim Bernoulli(0.5)$ & \multirow{3}*{100}\\
~ & $(X_2,X_3)^T|X_1=1\sim N((1,-1)^T,\Sigma_1)$ \\
~ & $(X_2,X_3)^T|X_1=0\sim N((-1,1)^T,\Sigma_2)$\\
\hline
\multirow{3}*{\{4,8,12,16\}} & $X_1\sim Bernoulli(0.6)$ & \multirow{3}*{75}\\
~ & $(X_2,X_3)^T|X_1=1\sim N((1,-1)^T,\Sigma_1)$ \\
~ & $(X_2,X_3)^T|X_1=0\sim N((-1,1)^T,\Sigma_2)$\\
\hline
\end{tabular}}

\end{table}



\begin{table}[ht]
\caption{Outcome mean functions for each group.}
\label{tab:outcome_functions}
\centering\scalebox{0.7}{
\begin{tabular}{cc}      
\hline
$b$ & $\mu_b(X)$ \\
\hline
1 & $3\exp\{0.7 + 0.1X_1 - 0.3X_2 - 0.2X_3^2 + 0.4\hbox{sign}(X_2^2 + 3X_3 - 2.5)\}$ \\
2 & $3\exp\{0.5 + 0.1X_1 + 0.15X_2 - 0.3X_3^2 + 0.5\hbox{sign}(X_2^2 + 3X_3 - 2.5)\}$ \\
3 & $3\exp\{0.6 + 0.1X_1 - 0.15X_2 - 0.3X_3 + 0.6\hbox{sign}(X_2^2 + 3X_3 - 2.5)\}$ \\
4 & $3\exp\{0.6 + 0.1X_1 + 0.2X_2 - 0.1X_3 - 0.1X_3^2 + 0.7\hbox{sign}(X_2^2 - X_3 - 2)\}$ \\
\hline
\end{tabular}}
\end{table}

As a baseline, we implemented the CAIPWL method \citep{zhou2023offline} without calibration weighting or fusion, using the default tuning in the R package \texttt{policytree} to learn a depth-3 policy tree. We further implemented the fusion step both with and without calibration weighting, using CAIPWL to learn the corresponding optimal policy trees. In the fusion step, fused lasso uses extended BIC \citep{chen2008extended} for model selection. Treatments are grouped if the Euclidean distance between their fused lasso estimates is less than 0.25. Additionally, for comparison, we implemented the method proposed by \citet{ma2022learning}, where the group structure is learned using fused Lasso without calibration weighting, and linear working models are used for both treatment fusion and policy learning.

We used the adjusted Rand index (ARI) \citep{gates2017impact} to assess the quality of the fusion by comparing it to the true underlying group structure presented in Table~\ref{tab:structure}. The ARI measures the similarity between two clusterings while accounting for random chance. It evaluates how pairs of items are grouped together or separated in both clusterings, adjusting for the possibility that some agreement could occur by chance. ARI produces a score between -1 and 1, where 1 indicates perfect agreement, 0 suggests no better than random chance, and negative values indicate worse-than-random fusion.
We then generated a large dataset following the same distribution as the sample dataset and used the average outcome mean function over the entire dataset as the test value. We conducted 200 replications of the learning-testing procedure, with the average results summarized in Table \ref{tab:simulation}.

\begin{table}[ht]
\centering
\scalebox{0.75}{
\begin{threeparttable}
\caption{Simulation results for $K=16$.}
\label{tab:simulation}
\begin{tabular}{cccc}
\hline
Method & ARI & Number of groups & Value \\
\hline
policy tree (baseline) & / & 16 & 8.77 (0.08) \\
fusion + policy tree & 0.26 (0.14) & 10.725 (1.93) & 8.78 (0.09) \\
CW + fusion + policy tree & \textbf{0.96} (0.06) & \textbf{4.335} (0.60) & \textbf{8.89} (0.11) \\
\citet{ma2022learning} & 0.26 (0.14) & 10.725 (1.93) & 8.51 (0.12) \\
\hline
\end{tabular}
\begin{tablenotes}
\footnotesize
\item CW = Calibration Weighting. ARI (Adjusted Rand Index for fusion quality) and policy value: higher is better. Oracle number of groups = 4. Numbers in parentheses are Monte Carlo standard errors. Results are averaged over 200 runs.
\end{tablenotes}
\end{threeparttable}}
\end{table}


We observed that, due to heterogeneity in the covariate distribution across different treatment groups, fusion without calibration weighting resulted in an average ARI of only 0.26, whereas fusion with calibration weighting achieved a significantly higher average ARI of 0.96. For the "fusion + policy tree" approach, the poor quality of fusion led to a lower average testing value compared to the "CW + fusion + policy tree" method. However, both approaches still outperformed the baseline. In contrast, the method proposed by \citet{ma2022learning} suffered from misspecified outcome mean functions, yielding results that were even worse than the baseline. We perform additional simulations with increased $K$ and fixed $n$, and under a misspecified weighting model, deferring the details to Section~\ref{sec:add}.

\subsection{Real Data Application}

We illustrate the proposed methods through an application to
data of patients with Chronic Lymphocytic Leukemia (CLL) and
Small Lymphocytic Lymphoma (SLL) from the nationwide Flatiron Health electronic health record-derived database. The Flatiron Health database is a longitudinal database, comprising de-identified patient-level structured and unstructured data, curated via technology-enabled abstraction \citep{ma2020comparison,birnbaum2020model}. During the study period, the de-identified data originated from approximately 280 US cancer clinics ($\sim$800 sites of care; primarily community oncology settings). The data are de-identified and subject to obligations to prevent re-identification and protect patient confidentiality.
CLL and SLL are slow-growing, indolent hematologic malignancies that primarily affect lymphocytes. In CLL, cancer cells are mainly found in the blood and bone marrow, while in SLL, they are mostly located in the lymph nodes. 
The relative 5-year survival rate following an initial CLL diagnosis is estimated to be 88.1\% (source: \href{https://seer.cancer.gov/statfacts/html/cllsll.html}{SEER Cancer Statistics}).

The dataset includes 10,346 patients who received first line of therapy (LOT), with details provided in Table \ref{tab:treatments}. The primary outcome is patient overall survival status (1 for survival, 0 for death), along with 10 covariates: race, region, PayerBin, SES Index (2015-2019), gender, ECOG score, Rai stage, lymphadenopathy, age at the start of first LOT, and the time from diagnosis to the initiation of first LOT.

\begin{table}[ht]
\caption{Sample size for each treatment.}
\label{tab:treatments}
\centering
\scalebox{1}{
\begin{tabular}{lc}
\hline
Treatment & Number of patients \\ 
\hline
cBTKi mono                 & 3392 \\ 
AntiCD20 + Chemotherapy Only & 1726 \\ 
AntiCD20 mono              & 1230 \\ 
BCL2i + AntiCD20 Only      & 463  \\ 
cBKTi + AntiCD20 Only      & 408  \\ 
Chemotherapy Only          & 215  \\ 
Other                      & 412  \\ 
\hline
Total                      & 10346 \\ 
\hline
\end{tabular}}
\end{table}

\begin{figure*}[ht]
\centering
\includegraphics[width=0.9\linewidth]{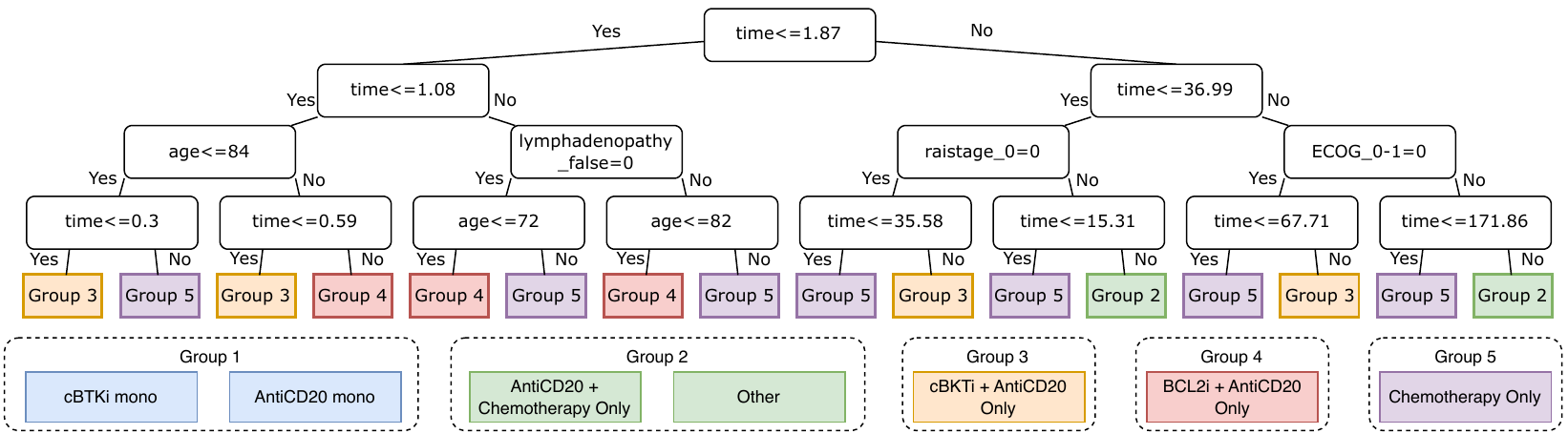}
\caption{The fusion results and the learned optimal policy tree assigns patients to grouped treatments based on covariate splits. \texttt{time} is the time from diagnosis to the first LOT; \texttt{age} is the age at the start of the first LOT; \texttt{lymphadenopathy\_false=1} indicates no lymph node swelling, and \texttt{lymphadenopathy\_false=0} indicates its presence; \texttt{raistage\_0=1} indicates Rai stage 0 (low risk), and \texttt{raistage\_0=0} indicates stages I-IV (intermediate to high risk); \texttt{ECOG\_0-1=1} indicates an ECOG score of 0 or 1 (good functional status), and \texttt{ECOG\_0-1=0} indicates a score of 2-5 (diminished functional status).
}
\label{fig:res2}
\end{figure*}

We implemented the proposed ``CW + fusion + policy tree" method described in Algorithms \ref{alg:cw_fusion} and \ref{alg:cf_aipw}. All 10 covariates were included in the calibration weighting and the estimation of nuisance functions in CAIPWL. The algorithm offers flexibility by enabling a subset of covariates to be used for learning the ITR, in contrast to methods that require the same covariates for weighting, modeling nuisance functions, and ITR learning. Certain confounders, such as race, region, and proxies of social status like PayerBin and SES Index (2015-2019), were excluded from ITR learning to avoid their use in treatment assignment. As a result, the remaining six covariates were used for fused Lasso and ITR. 

The fusion results and the learned optimal policy tree are presented in Figure \ref{fig:res2}. The following insights can be drawn: (i) Two monotherapies are grouped together in Group 1, reflecting their similar mechanisms of action and treatment intensity. (ii) Combination therapies are assigned to distinct groups, while chemotherapy-only forms its own separate group. (iii) Older patients or those with shorter time since diagnosis are more likely recommended to Group 5 (chemotherapy-only), likely due to limited treatment tolerance and the need for immediate intervention. (iv) Relatively younger patients or those with longer time since diagnosis tend to be recommended to Groups 2, 3, or 4 (combination therapies), likely due to their better functional status to tolerate aggressive treatments, and longer diagnostic times often indicate a chronic disease course requiring more targeted interventions to manage progression. 
Our findings provide valuable insights for guiding future individualized treatment strategies while ensuring their practical feasibility.


\section{Conclusion}

This paper introduces a calibration-weighted treatment fusion procedure to address the challenges of many treatments in ITR learning. By leveraging treatment similarities and robustly balancing covariates, our method employs weighted fused Lasso to recover the latent group structure of treatments, providing theoretical guarantees of consistency and double robustness. Practitioners can seamlessly integrate the fusion results with state-of-the-art ITR learning methods, such as policy trees, which offer $\sqrt{n}$-regret bounds, enabling flexible and interpretable decision-making. Simulation studies highlight the superiority of our method over baseline and competing approaches. Additionally, we demonstrate its practical utility through a real data application, yielding clinically relevant insights.

In extreme cases where some treatment arms have few or no observations, our method may become unstable due to the lack of information, unless additional structural assumptions, such as the combinatorial structure in \citet{agarwal2023synthetic}, are imposed. This instability underscores the inherent difficulty of the problem and suggests directions for methodological improvement. Currently, the fusion step is performed once; an alternative is to adopt an iterative procedure that alternates between treatment fusion and weight estimation to enhance stability.

Our method can be extended in several directions. First, while this study focuses on a single data source, future work could enhance the procedure by integrating multi-source data or generalizing learned ITRs to other target populations \citep{mo2021learning,chu2023targeted,wu2023transfer,zhang2024minimax,carranza2024robust}. Second, beyond ITR estimation, providing inference and uncertainty quantification is critical \citep{liang2022estimation,ghosh2023flexible,cheng2024inference}, especially in high-stakes contexts like medicine, with conformal prediction offering a promising approach \citep{osama2020learning,taufiq2022conformal,zhang2023conformal}. Finally, high-dimensional treatments also occur in heterogeneous treatment effect (HTE) estimation \citep{goplerud2022estimating}, and extending our procedure to HTE frameworks would enhance its applicability.




\section*{Acknowledgment}

We thank the anonymous reviewers and meta-reviewers of ICML 2025 for their valuable feedback, which led to a greatly improved manuscript. Yang was partially supported by the National Science Foundation grant SES 2242776 and the National Institutes of Health grants 1R01ES031651 and 1R01HL169347.

\section*{Impact Statement}

This paper presents work whose goal is to advance the field of policy learning. There are many potential societal consequences of our work, none which we feel must be specifically highlighted here.


\bibliography{ref}

\begin{thebibliography}{59}
\providecommand{\natexlab}[1]{#1}
\providecommand{\url}[1]{\texttt{#1}}
\expandafter\ifx\csname urlstyle\endcsname\relax
  \providecommand{\doi}[1]{doi: #1}\else
  \providecommand{\doi}{doi: \begingroup \urlstyle{rm}\Url}\fi

\bibitem[Agarwal et~al.(2023)Agarwal, Agarwal, and Vijaykumar]{agarwal2023synthetic}
Agarwal, A., Agarwal, A., and Vijaykumar, S.
\newblock Synthetic combinations: A causal inference framework for combinatorial interventions.
\newblock \emph{Advances in Neural Information Processing Systems}, 36:\penalty0 19195--19216, 2023.

\bibitem[Aouali et~al.(2024)Aouali, Brunel, Rohde, and Korba]{aouali2024bayesian}
Aouali, I., Brunel, V.-E., Rohde, D., and Korba, A.
\newblock Bayesian off-policy evaluation and learning for large action spaces.
\newblock \emph{arXiv preprint arXiv:2402.14664}, 2024.

\bibitem[Athey \& Wager(2021)Athey and Wager]{athey2021policy}
Athey, S. and Wager, S.
\newblock Policy learning with observational data.
\newblock \emph{Econometrica}, 89\penalty0 (1):\penalty0 133--161, 2021.

\bibitem[Birnbaum et~al.(2020)Birnbaum, Nussbaum, Seidl-Rathkopf, Agrawal, Estevez, Estola, Haimson, He, Larson, and Richardson]{birnbaum2020model}
Birnbaum, B., Nussbaum, N., Seidl-Rathkopf, K., Agrawal, M., Estevez, M., Estola, E., Haimson, J., He, L., Larson, P., and Richardson, P.
\newblock Model-assisted cohort selection with bias analysis for generating large-scale cohorts from the {EHR} for oncology research.
\newblock \emph{arXiv preprint arXiv:2001.09765}, 2020.

\bibitem[Boos \& Stefanski(2013)Boos and Stefanski]{boos2013essential}
Boos, D.~D. and Stefanski, L.~A.
\newblock \emph{Essential statistical inference: theory and methods}, volume 120.
\newblock Springer Science \& Business Media, 2013.

\bibitem[Bruns-Smith et~al.(2025)Bruns-Smith, Dukes, Feller, and Ogburn]{bruns2025augmented}
Bruns-Smith, D., Dukes, O., Feller, A., and Ogburn, E.~L.
\newblock Augmented balancing weights as linear regression.
\newblock \emph{Journal of the Royal Statistical Society Series B: Statistical Methodology}, pp.\  qkaf019, 2025.

\bibitem[Cai et~al.(2021)Cai, Shi, Song, and Lu]{cai2021deep}
Cai, H., Shi, C., Song, R., and Lu, W.
\newblock Deep jump learning for off-policy evaluation in continuous treatment settings.
\newblock \emph{Advances in Neural Information Processing Systems}, 34:\penalty0 15285--15300, 2021.

\bibitem[Carranza \& Athey(2024)Carranza and Athey]{carranza2024robust}
Carranza, A.~G. and Athey, S.
\newblock Robust offline policy learning with observational data from multiple sources.
\newblock \emph{arXiv preprint arXiv:2410.08537}, 2024.

\bibitem[Chambaz et~al.(2012)Chambaz, Neuvial, and van~der Laan]{chambaz2012estimation}
Chambaz, A., Neuvial, P., and van~der Laan, M.
\newblock Estimation of a non-parametric variable importance measure of a continuous exposure.
\newblock \emph{Electronic Journal of Statistics}, 6:\penalty0 1059--1099, 2012.

\bibitem[Chen \& Chen(2008)Chen and Chen]{chen2008extended}
Chen, J. and Chen, Z.
\newblock Extended bayesian information criteria for model selection with large model spaces.
\newblock \emph{Biometrika}, 95\penalty0 (3):\penalty0 759--771, 2008.

\bibitem[Cheng \& Yang(2024)Cheng and Yang]{cheng2024inference}
Cheng, Y. and Yang, S.
\newblock Inference for optimal linear treatment regimes in personalized decision-making.
\newblock In Kiyavash, N. and Mooij, J.~M. (eds.), \emph{Proceedings of the Fortieth Conference on Uncertainty in Artificial Intelligence}, volume 244 of \emph{Proceedings of Machine Learning Research}, pp.\  718--735. PMLR, 15--19 Jul 2024.
\newblock URL \url{https://proceedings.mlr.press/v244/cheng24b.html}.

\bibitem[Chernozhukov et~al.(2019)Chernozhukov, Demirer, Lewis, and Syrgkanis]{chernozhukov2019semi}
Chernozhukov, V., Demirer, M., Lewis, G., and Syrgkanis, V.
\newblock Semi-parametric efficient policy learning with continuous actions.
\newblock \emph{Advances in Neural Information Processing Systems}, 32, 2019.

\bibitem[Chu et~al.(2023)Chu, Lu, and Yang]{chu2023targeted}
Chu, J., Lu, W., and Yang, S.
\newblock Targeted optimal treatment regime learning using summary statistics.
\newblock \emph{Biometrika}, 110\penalty0 (4):\penalty0 913--931, 2023.

\bibitem[Cressie \& Read(1984)Cressie and Read]{cressie1984multinomial}
Cressie, N. and Read, T.~R.
\newblock Multinomial goodness-of-fit tests.
\newblock \emph{Journal of the Royal Statistical Society: Series B (Methodological)}, 46\penalty0 (3):\penalty0 440--464, 1984.

\bibitem[Ertefaie et~al.(2023)Ertefaie, Hejazi, and van~der Laan]{ertefaie2023nonparametric}
Ertefaie, A., Hejazi, N.~S., and van~der Laan, M.~J.
\newblock Nonparametric inverse-probability-weighted estimators based on the highly adaptive lasso.
\newblock \emph{Biometrics}, 79\penalty0 (2):\penalty0 1029--1041, 2023.

\bibitem[Gao et~al.(2024)Gao, Zhang, and Yang]{gao2024causal}
Gao, C., Zhang, Z., and Yang, S.
\newblock Causal customer churn analysis with low-rank tensor block hazard model.
\newblock In \emph{International Conference on Machine Learning}, pp.\  14920--14953. PMLR, 2024.

\bibitem[Gates \& Ahn(2017)Gates and Ahn]{gates2017impact}
Gates, A.~J. and Ahn, Y.-Y.
\newblock The impact of random models on clustering similarity.
\newblock \emph{Journal of Machine Learning Research}, 18\penalty0 (87):\penalty0 1--28, 2017.

\bibitem[Ghosh et~al.(2023)Ghosh, Ma, Song, and Zhong]{ghosh2023flexible}
Ghosh, T., Ma, Y., Song, R., and Zhong, P.
\newblock Flexible inference of optimal individualized treatment strategy in covariate adjusted randomization with multiple covariates.
\newblock \emph{Electronic Journal of Statistics}, 17\penalty0 (1):\penalty0 1344--1370, 2023.

\bibitem[Goplerud et~al.(2025)Goplerud, Imai, and Pashley]{goplerud2022estimating}
Goplerud, M., Imai, K., and Pashley, N.~E.
\newblock Estimating heterogeneous causal effects of high-dimensional treatments: Application to conjoint analysis.
\newblock \emph{The Annals of Applied Statistics}, in press, 2025.

\bibitem[Kaddour et~al.(2021)Kaddour, Zhu, Liu, Kusner, and Silva]{kaddour2021causal}
Kaddour, J., Zhu, Y., Liu, Q., Kusner, M.~J., and Silva, R.
\newblock Causal effect inference for structured treatments.
\newblock \emph{Advances in Neural Information Processing Systems}, 34:\penalty0 24841--24854, 2021.

\bibitem[Laber \& Zhao(2015)Laber and Zhao]{laber2015tree}
Laber, E.~B. and Zhao, Y.-Q.
\newblock Tree-based methods for individualized treatment regimes.
\newblock \emph{Biometrika}, 102\penalty0 (3):\penalty0 501--514, 2015.

\bibitem[Lee et~al.(2023)Lee, Yang, Dong, Wang, Zeng, and Cai]{lee2023improving}
Lee, D., Yang, S., Dong, L., Wang, X., Zeng, D., and Cai, J.
\newblock Improving trial generalizability using observational studies.
\newblock \emph{Biometrics}, 79\penalty0 (2):\penalty0 1213--1225, 2023.

\bibitem[Liang et~al.(2018)Liang, Ye, and Fu]{liang2018estimating}
Liang, M., Ye, T., and Fu, H.
\newblock Estimating individualized optimal combination therapies through outcome weighted deep learning algorithms.
\newblock \emph{Statistics in medicine}, 37\penalty0 (27):\penalty0 3869--3886, 2018.

\bibitem[Liang et~al.(2022)Liang, Choi, Ning, Smith, and Zhao]{liang2022estimation}
Liang, M., Choi, Y.-G., Ning, Y., Smith, M.~A., and Zhao, Y.-Q.
\newblock Estimation and inference on high-dimensional individualized treatment rule in observational data using split-and-pooled de-correlated score.
\newblock \emph{Journal of Machine Learning Research}, 23\penalty0 (262):\penalty0 1--65, 2022.

\bibitem[Ma et~al.(2022)Ma, Zeng, and Liu]{ma2022learning}
Ma, H., Zeng, D., and Liu, Y.
\newblock Learning individualized treatment rules with many treatments: A supervised clustering approach using adaptive fusion.
\newblock \emph{Advances in Neural Information Processing Systems}, 35:\penalty0 15956--15969, 2022.

\bibitem[Ma et~al.(2023)Ma, Zeng, and Liu]{ma2023learning}
Ma, H., Zeng, D., and Liu, Y.
\newblock Learning optimal group-structured individualized treatment rules with many treatments.
\newblock \emph{Journal of Machine Learning Research}, 24\penalty0 (102):\penalty0 1--48, 2023.

\bibitem[Ma \& Huang(2017)Ma and Huang]{ma2017concave}
Ma, S. and Huang, J.
\newblock A concave pairwise fusion approach to subgroup analysis.
\newblock \emph{Journal of the American Statistical Association}, 112\penalty0 (517):\penalty0 410--423, 2017.

\bibitem[Ma et~al.(2020)Ma, Long, Moon, Adamson, and Baxi]{ma2020comparison}
Ma, X., Long, L., Moon, S., Adamson, B.~J., and Baxi, S.~S.
\newblock Comparison of population characteristics in real-world clinical oncology databases in the {US}: Flatiron health, {SEER}, and {NPCR}.
\newblock \emph{Medrxiv}, 2020.

\bibitem[Marmarelis et~al.(2024)Marmarelis, Morstatter, Galstyan, and Ver~Steeg]{marmarelis2024policy}
Marmarelis, M.~G., Morstatter, F., Galstyan, A., and Ver~Steeg, G.
\newblock Policy learning for localized interventions from observational data.
\newblock In \emph{International Conference on Artificial Intelligence and Statistics}, pp.\  4456--4464. PMLR, 2024.

\bibitem[Mo et~al.(2021)Mo, Qi, and Liu]{mo2021learning}
Mo, W., Qi, Z., and Liu, Y.
\newblock Learning optimal distributionally robust individualized treatment rules.
\newblock \emph{Journal of the American Statistical Association}, 116\penalty0 (534):\penalty0 659--674, 2021.

\bibitem[Murphy(2003)]{murphy2003optimal}
Murphy, S.~A.
\newblock Optimal dynamic treatment regimes.
\newblock \emph{Journal of the Royal Statistical Society: Series B (Statistical Methodology)}, 65\penalty0 (2):\penalty0 331--355, 2003.

\bibitem[Nilforoshan et~al.(2023)Nilforoshan, Moor, Roohani, Chen, {\v{S}}urina, Yasunaga, Oblak, and Leskovec]{nilforoshan2023zero}
Nilforoshan, H., Moor, M., Roohani, Y., Chen, Y., {\v{S}}urina, A., Yasunaga, M., Oblak, S., and Leskovec, J.
\newblock Zero-shot causal learning.
\newblock \emph{Advances in Neural Information Processing Systems}, 36:\penalty0 6862--6901, 2023.

\bibitem[Osama et~al.(2020)Osama, Zachariah, and Stoica]{osama2020learning}
Osama, M., Zachariah, D., and Stoica, P.
\newblock Learning robust decision policies from observational data.
\newblock \emph{Advances in Neural Information Processing Systems}, 33:\penalty0 18205--18214, 2020.

\bibitem[Peng et~al.(2023)Peng, Zou, Liu, Li, Jiang, Pei, and Cui]{peng2023offline}
Peng, J., Zou, H., Liu, J., Li, S., Jiang, Y., Pei, J., and Cui, P.
\newblock Offline policy evaluation in large action spaces via outcome-oriented action grouping.
\newblock In \emph{Proceedings of the ACM Web Conference 2023}, pp.\  1220--1230, 2023.

\bibitem[Qi \& Liu(2018)Qi and Liu]{qi2018d}
Qi, Z. and Liu, Y.
\newblock D-learning to estimate optimal individual treatment rules.
\newblock \emph{Electronic Journal of Statistics}, 12\penalty0 (2), 2018.

\bibitem[Qian \& Murphy(2011)Qian and Murphy]{qian2011performance}
Qian, M. and Murphy, S.~A.
\newblock Performance guarantees for individualized treatment rules.
\newblock \emph{The Annals of Statistics}, 39\penalty0 (2):\penalty0 1180--1210, 2011.

\bibitem[Rashid et~al.(2021)Rashid, Luckett, Chen, Lawson, Wang, Zhang, Laber, Liu, Yeh, Zeng, et~al.]{rashid2020high}
Rashid, N.~U., Luckett, D.~J., Chen, J., Lawson, M.~T., Wang, L., Zhang, Y., Laber, E.~B., Liu, Y., Yeh, J.~J., Zeng, D., et~al.
\newblock High-dimensional precision medicine from patient-derived xenografts.
\newblock \emph{Journal of the American Statistical Association}, 116\penalty0 (535):\penalty0 1140--1154, 2021.

\bibitem[Rubin(1978)]{rubin1978bayesian}
Rubin, D.~B.
\newblock Bayesian inference for causal effects: The role of randomization.
\newblock \emph{The Annals of Statistics}, 6\penalty0 (1):\penalty0 34--58, 1978.

\bibitem[Sachdeva et~al.(2024)Sachdeva, Wang, Liang, Kallus, and McAuley]{sachdeva2024off}
Sachdeva, N., Wang, L., Liang, D., Kallus, N., and McAuley, J.
\newblock Off-policy evaluation for large action spaces via policy convolution.
\newblock In \emph{Proceedings of the ACM Web Conference 2024}, pp.\  3576--3585, 2024.

\bibitem[Saito \& Joachims(2022)Saito and Joachims]{saito2022off}
Saito, Y. and Joachims, T.
\newblock Off-policy evaluation for large action spaces via embeddings.
\newblock In \emph{International Conference on Machine Learning}, pp.\  19089--19122. PMLR, 2022.

\bibitem[Saito et~al.(2023)Saito, Ren, and Joachims]{saito2023off}
Saito, Y., Ren, Q., and Joachims, T.
\newblock Off-policy evaluation for large action spaces via conjunct effect modeling.
\newblock In \emph{international conference on Machine learning}, pp.\  29734--29759. PMLR, 2023.

\bibitem[Schwarz(1978)]{schwarz1978estimating}
Schwarz, G.
\newblock Estimating the dimension of a model.
\newblock \emph{The annals of statistics}, pp.\  461--464, 1978.

\bibitem[Schweisthal et~al.(2023)Schweisthal, Frauen, Melnychuk, and Feuerriegel]{schweisthal2023reliable}
Schweisthal, J., Frauen, D., Melnychuk, V., and Feuerriegel, S.
\newblock Reliable off-policy learning for dosage combinations.
\newblock \emph{Advances in Neural Information Processing Systems}, 36:\penalty0 67900--67924, 2023.

\bibitem[Shi et~al.(2018)Shi, Fan, Song, and Lu]{shi2018high}
Shi, C., Fan, A., Song, R., and Lu, W.
\newblock High-dimensional a-learning for optimal dynamic treatment regimes.
\newblock \emph{Annals of statistics}, 46\penalty0 (3):\penalty0 925, 2018.

\bibitem[Song et~al.(2015)Song, Wang, Zeng, and Kosorok]{song2015penalized}
Song, R., Wang, W., Zeng, D., and Kosorok, M.~R.
\newblock Penalized {Q}-learning for dynamic treatment regimens.
\newblock \emph{Statistica Sinica}, 25\penalty0 (3):\penalty0 901, 2015.

\bibitem[Taufiq et~al.(2022)Taufiq, Ton, Cornish, Teh, and Doucet]{taufiq2022conformal}
Taufiq, M.~F., Ton, J.-F., Cornish, R., Teh, Y.~W., and Doucet, A.
\newblock Conformal off-policy prediction in contextual bandits.
\newblock \emph{Advances in Neural Information Processing Systems}, 35:\penalty0 31512--31524, 2022.

\bibitem[Tibshirani et~al.(2005)Tibshirani, Saunders, Rosset, Zhu, and Knight]{tibshirani2005sparsity}
Tibshirani, R., Saunders, M., Rosset, S., Zhu, J., and Knight, K.
\newblock Sparsity and smoothness via the fused lasso.
\newblock \emph{Journal of the Royal Statistical Society Series B: Statistical Methodology}, 67\penalty0 (1):\penalty0 91--108, 2005.

\bibitem[Wainwright(2019)]{wainwright2019high}
Wainwright, M.~J.
\newblock \emph{High-dimensional Statistics: A Non-asymptotic Viewpoint}, volume~48.
\newblock Cambridge University Press, 2019.

\bibitem[Watkins \& Dayan(1992)Watkins and Dayan]{watkins1992q}
Watkins, C.~J. and Dayan, P.
\newblock Q-learning.
\newblock \emph{Machine Learning}, 8\penalty0 (3-4):\penalty0 279--292, 1992.

\bibitem[Wooldridge(2012)]{wooldridge2012introductory}
Wooldridge, J.~M.
\newblock \emph{Introductory Econometrics: A Modern Approach}.
\newblock South-Western Cengage Learning, Mason, OH, 5th edition, 2012.
\newblock ISBN 978-1-111-53104-1.

\bibitem[Wu \& Yang(2023)Wu and Yang]{wu2023transfer}
Wu, L. and Yang, S.
\newblock Transfer learning of individualized treatment rules from experimental to real-world data.
\newblock \emph{Journal of Computational and Graphical Statistics}, 32\penalty0 (3):\penalty0 1036--1045, 2023.

\bibitem[Xu et~al.(2024{\natexlab{a}})Xu, Cao, Chen, Zeng, Fu, and Qu]{xu2024multi}
Xu, Q., Cao, X., Chen, G., Zeng, H., Fu, H., and Qu, A.
\newblock Multi-label residual weighted learning for individualized combination treatment rule.
\newblock \emph{Electronic Journal of Statistics}, 18\penalty0 (1):\penalty0 1517--1548, 2024{\natexlab{a}}.

\bibitem[Xu et~al.(2024{\natexlab{b}})Xu, Fu, and Qu]{xu2024optimal}
Xu, Q., Fu, H., and Qu, A.
\newblock Optimal individualized treatment rule for combination treatments under budget constraints.
\newblock \emph{Journal of the Royal Statistical Society Series B: Statistical Methodology}, 86\penalty0 (3):\penalty0 714--741, 2024{\natexlab{b}}.

\bibitem[Zhang et~al.(2012)Zhang, Tsiatis, Laber, and Davidian]{zhang2012robust}
Zhang, B., Tsiatis, A.~A., Laber, E.~B., and Davidian, M.
\newblock A robust method for estimating optimal treatment regimes.
\newblock \emph{Biometrics}, 68\penalty0 (4):\penalty0 1010--1018, 2012.

\bibitem[Zhang et~al.(2015)Zhang, Laber, Tsiatis, and Davidian]{zhang2015using}
Zhang, Y., Laber, E.~B., Tsiatis, A., and Davidian, M.
\newblock Using decision lists to construct interpretable and parsimonious treatment regimes.
\newblock \emph{Biometrics}, 71\penalty0 (4):\penalty0 895--904, 2015.

\bibitem[Zhang et~al.(2023)Zhang, Shi, and Luo]{zhang2023conformal}
Zhang, Y., Shi, C., and Luo, S.
\newblock Conformal off-policy prediction.
\newblock In \emph{International Conference on Artificial Intelligence and Statistics}, pp.\  2751--2768. PMLR, 2023.

\bibitem[Zhang et~al.(2024)Zhang, Huang, and Imai]{zhang2024minimax}
Zhang, Y., Huang, M., and Imai, K.
\newblock Minimax regret estimation for generalizing heterogeneous treatment effects with multisite data.
\newblock \emph{arXiv preprint arXiv:2412.11136}, 2024.

\bibitem[Zhao et~al.(2012)Zhao, Zeng, Rush, and Kosorok]{zhao2012estimating}
Zhao, Y., Zeng, D., Rush, A.~J., and Kosorok, M.~R.
\newblock Estimating individualized treatment rules using outcome weighted learning.
\newblock \emph{Journal of the American Statistical Association}, 107\penalty0 (499):\penalty0 1106--1118, 2012.

\bibitem[Zhou et~al.(2023)Zhou, Athey, and Wager]{zhou2023offline}
Zhou, Z., Athey, S., and Wager, S.
\newblock Offline multi-action policy learning: Generalization and optimization.
\newblock \emph{Operations Research}, 71\penalty0 (1):\penalty0 148--183, 2023.

\end{thebibliography}
\bibliographystyle{icml2025}

\newpage
\appendix
\onecolumn

\section{Proof}

\subsection{Rationale for Assumption \ref{ass:w}}
\label{sec:rat}

For every treatment $a\in\mathcal{A}$, calibration weighting is an optimization problem and can be solved using the method of Lagrange multipliers: 
$$ L_a(w_1,\ldots,w_n)=\sum_{i:A_i=a}\frac{(n_a w_i)^{\gamma+1}-1}{\gamma(\gamma+1)}-n\lambda^\top \sum_{i:A_i=a}w_i(X_i-\bar{X})+n\varphi\left(1-\sum_{i:A_i=a} w_i\right). $$ 
Minimizing $L_a(w_1,\ldots,w_n)$ gives: 
$$ \hat{w}_i=w(X_i;\hat\lambda)=\frac{\rho_{\gamma}[\hat\lambda^\top(X_i-\bar{X})]}{\sum_{j:A_j=a} \rho_{\gamma}[\hat\lambda^\top(X_j-\bar{X})]}, 
$$ 
where the function $\rho_{\gamma}(x)$ for different $\gamma$ values are summarized in Table \ref{tab:cw}, and $\widehat{\lambda}$ solves the equation 
$$ \sum_{i:A_i=a}\rho_{\gamma}[\lambda^\top(X_i-\bar{X})](X_i-\bar{X})=0. 
$$ 
Therefore, $\hat{\lambda}$ is an M-estimator and, under standard regularity conditions for M-estimators \citep{boos2013essential}, it is root-n consistent.

\begin{table}[ht]
\centering
\caption{$\rho_{\gamma}(x)$ for Cressie-Read family.}
\label{tab:cw}
\begin{tabular}{ccc}
\toprule
$\gamma$ & $h_{\gamma}(w)$ & $\rho_{\gamma}(x)$ \\
\midrule
$-1$ & $-\ln(nw)$ & $(1 - x)^{-1}$ \\
$0$ & $nw\ln(nw)$ & $\exp(x)$ \\
$\gamma$ & $\dfrac{(nw)^{\gamma+1} - 1}{\gamma(\gamma+1)}$ & $(1 + \gamma x)^{1/\gamma}$ \\
\bottomrule
\end{tabular}
\end{table}

\subsection{Lemma \ref{lm:E}}

\begin{lemma}
\label{lm:E}
Suppose Assumptions \ref{assump:causal} and \ref{ass:model} hold. For any $i=1,\ldots,n$, $j=1,\ldots,p$, and $a\in\cA$,
$$
\E\big\{
\Ia \ws{i}X_{ij} \epi{a}{i}
\big\}=0.
$$
\end{lemma}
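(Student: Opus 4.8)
The plan is to prove the identity by splitting on the two branches of the doubly robust Assumption~\ref{ass:model}, after first recording a measurability fact that drives both arguments. The key observation is that the projection residual $\ep{a}$ defined in \eqref{eq:pop3} is a deterministic function of $(Y(a), X)$, since $\tY(a) = Y(a) - M_0(X)$ and $\zs{a}$ is a fixed population vector. Consequently, the unconfoundedness part of Assumption~\ref{assump:causal}, $Y(a)\ind A\mid X$, transfers to $\ep{a}\ind A\mid X$. A second fact I would note is that on the event $\{A_i = a\}$ the limiting weight $\ws{i}$ is a measurable function of $X_i$ (it is the limit of $w(X_i;\hat\lambda)$ with a group-specific $\lambda$; see Section~\ref{sec:rat}). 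These two observations let me condition on the appropriate sigma-field and factorize cleanly.

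Under branch (i) (correct calibration weighting), on $\{A_i = a\}$ we have $\ws{i} = 1/\pa(X_i)$, so the target becomes $\E\{\Ia \pa^{-1}(X_i) X_{ij}\epi{a}{i}\}$. I would condition on $X_i$, pull out the $X_i$-measurable factor $\pa^{-1}(X_i)X_{ij}$, and use $\ep{a}\ind A\mid X$ to write $\E\{\Ia \epi{a}{i}\mid X_i\} = \pa(X_i)\,\E\{\epi{a}{i}\mid X_i\}$. The two propensity factors cancel, leaving $\E\{X_{ij}\,\E[\epi{a}{i}\mid X_i]\} = \E\{X_{ij}\epi{a}{i}\}$ by the tower property, which vanishes by the population orthogonality \eqref{eq:pop2}/\eqref{eq:pop4} (holding componentwise in $j$). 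Notably, this branch uses no assumption on $\E\{\ep{a}\mid X\}$.

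Under branch (ii) (correct outcome model), the weights need not be inverse propensities, but they are bounded by Assumption~\ref{ass:w} and, on $\{A_i=a\}$, measurable with respect to $X_i$. I would condition on $(A_i, X_i)$ and pull out the $(A_i,X_i)$-measurable factor $\Ia \ws{i}X_{ij}$, reducing the target to $\E\{\Ia\ws{i}X_{ij}\,\E[\epi{a}{i}\mid A_i,X_i]\}$. Unconfoundedness gives $\E[\epi{a}{i}\mid A_i,X_i] = \E[\epi{a}{i}\mid X_i]$, and branch (ii) asserts $\E\{\ep{a}\mid X\}=0$, so the inner conditional expectation is identically zero and the whole expression vanishes.

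The only delicate point—and the step I would state most carefully—is the factorization $\E\{\Ia\epi{a}{i}\mid X_i\}=\pa(X_i)\,\E\{\epi{a}{i}\mid X_i\}$ in branch (i) and the reduction $\E[\epi{a}{i}\mid A_i,X_i]=\E[\epi{a}{i}\mid X_i]$ in branch (ii), both of which rest on arguing that $\ep{a}$ is a function of $(Y(a),X)$ and hence conditionally independent of $A$ given $X$. Everything else is routine conditioning, cancellation of propensity factors, and an appeal to the population normal equations; I expect no further obstacle.
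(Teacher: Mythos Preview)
Your proof is correct and follows essentially the same approach as the paper: both arguments hinge on the observation that $\ep{a}$ is a function of $(Y(a),X)$, so unconfoundedness lets you factor $\E\{\Ia\ws{i}\mid X_i\}$ apart from $\E\{\epi{a}{i}\mid X_i\}$, after which branch~(i) reduces to the population normal equations \eqref{eq:pop4} and branch~(ii) kills the inner conditional expectation directly. The only cosmetic difference is that the paper conditions on $X_i$ once up front and handles both branches from the same factorized display, whereas you condition on $X_i$ in branch~(i) and on $(A_i,X_i)$ in branch~(ii); your explicit note that $\ws{i}$ is $X_i$-measurable on $\{A_i=a\}$ (via Section~\ref{sec:rat}) is a welcome bit of care that the paper leaves implicit.
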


\begin{proof}[Proof of Lemma \ref{lm:E}]
By the unconfoundedness in Assumption \ref{assump:causal}, we have
\begin{align*}
\E\big\{
\Ia \ws{i}X_{ij} \epi{a}{i}
\big\}&=
\E\big[\E\big\{
\Ia \ws{i}X_{ij} \epi{a}{i}
\mid X_i\big\}\big] \nonumber\\
&=
\E\big[
\E\big\{
\Ia \ws{i} 
\mid X_i\big\}
X_{ij}
\E\big\{
\epi{a}{i}
\mid X_i\big\}
\big].
\end{align*}
If (i) (correct calibration weighting) in Assumption \ref{ass:model} holds, i.e., $\E\big\{
\Ia \ws{i} 
\mid X_i\big\}=1$, we have
\begin{align*}
\E\big[
\E\big\{
\Ia \ws{i} 
\mid X_i\big\}
X_{ij}
\E\big\{
\epi{a}{i}
\mid X_i\big\}
\big]
=
\E\big[
X_{ij}
\E\big\{
\epi{a}{i}
\mid X_i\big\}
\big]
=
\E\big[
X_{ij}
\epi{a}{i}
\big].
\end{align*}
By the definitions of the projection vector and projection residual:
\begin{equation}
\label{eq:proj}
\zs{a}:=\argmin_{\bz\in\R^p}\E\left\{
\tY(a)-X^\T\bz
\right\}^2,\quad\text{and}\quad
\ep{a}:=\tY(a)-X^\T\zs{a},
\end{equation}
the result follows from
$$
E\left\{X^\T\ep{a}\right\}=E\left[X^\T\left\{
\tY(a)-X^\T\zs{a}
\right\}\right]=0.
$$
Note that $E\left\{X^\T\ep{a}\right\}=0$ does not require a linear model between $\tY(a)$ and $X$; it holds solely due to the projection \eqref{eq:proj}.

If (ii) (correct outcome model) in Assumption \ref{ass:model} holds, i.e., $\E\big\{
\epi{a}{i}
\mid X_i\big\}=0$, we have
\begin{align*}
\E\big[
\E\big\{
\Ia \ws{i} 
\mid X_i\big\}
X_{ij}
\E\big\{
\epi{a}{i}
\mid X_i\big\}
\big]=0.
\end{align*} 
Note that the above equation does not require calibration weighting to be correct; it holds solely due to $E\left\{\ep{a} \mid X\right\} = 0$, a condition stronger than $E\left\{X^\T \ep{a}\right\} = 0$. This condition is also referred to as the exogeneity assumption or the zero conditional mean condition \citep{wooldridge2012introductory}.
\end{proof}

\subsection{Proof of Theorem \ref{thm-con}}

\begin{proof}[Proof of Theorem \ref{thm-con}]
We rewrite $\hg{\b}$ to relate it to the potential outcomes
\begin{align*}
\hg{\b} &= 
\min_{\bg_m\in\R^{p}}
\frac1{2n} \sumn\sumam \Ia\hw{i}\left(
\tY_i-X_i^\T\bg_m
\right)^2\\
&=\min_{\bg_m\in\R^{p}}
\frac1{2n} \sumn\sumam \Ia\hw{i}\left(
\tY_i(a)-X_i^\T\bg_m
\right)^2\\
&=\min_{\bg_m\in\R^{p}}
\frac1{2n} \sumn\sumam \Ia\hw{i}\left(
X^\T\gs{\b}+\ep{a}-X_i^\T\bg_m
\right)^2.
\end{align*}
where the second equation is due to $\Ia\tY_i=\Ia\tY_i(a)$ and the third equation is due to \eqref{eq:decom}.
Then, the least squares estimation leads to
$$
\hg{\b}-\gs{\b}=
\Bigg\{
\underbrace{
\sumn\sumam\Ia\hw{i} X_i X_i^\T
}_{\GhwX{\b}}
\Bigg\}^{-1}
\Bigg\{
\underbrace{
\sumn\sumam\Ia\hw{i} X_i \epi{a}{i}
}_{\Ghwe{\b}}
\Bigg\}.
$$
We have
\begin{equation}
\label{eq:beta}
\maxm\|\hg{\b}-\gs{\b}\|_{\infty}\leq 
\maxm \|\GhwX{\b}^{-1} \|_{\infty}  \|\Ghwe{\b} \|_{\infty}.  
\end{equation}
We examine $\|\GhwX{\b}^{-1} \|_{\infty}$ and $\|\Ghwe{\b} \|_{\infty}$, respectively.

\textbf{Step 1} (Bound $\|\GhwX{\b}^{-1} \|_{\infty}$). By Assumption \ref{ass:w}, $\exists \epsilon>0$, with probability at least $1 - \iota_n$ (where $\iota_n \to 0$ as $n \to \infty$),  
$
\hw{i}\geq \ws{i} -|\hw{i}-\ws{i}|
\geq\Cwl -\epsilon
:=\Cwl^\prime,
$
for any $i$.
Thus, we have
$$
\|\GhwX{\b}^{-1} \|_2
=\Lambda_{\rm max} (\GhwX{\b}^{-1}) 
= \frac{1}{\Lambda_{\rm min} (\GhwX{\b})} 
\leq\frac{1}{\Cxl \Cwl^\prime \Nmin},
$$
where the last inequality is due to Assumptions \ref{ass:X}.
Then, we have 
\begin{align}
\label{eq:XX}
\maxm \|\GhwX{\b}^{-1} \|_{\infty}
\leq
\sqrt{p}\maxm\|\GhwX{\b}^{-1} \|_2 
\leq
\frac{\sqrt{p}}{\Cxl \Cwl^\prime \Nmin}.
\end{align}


\textbf{Step 2} (Bound $\|\Ghwe{\b} \|_{\infty}$). We first bound
$$
\Gwe{\b}:=
\sumn\sumam\Ia\ws{i} X_i \epi{a}{i}.
$$
By Assumptions \ref{ass:w} and \ref{ass:X}, we have
$$
\sumn\sumam\Ia(\ws{i} X_{ij})^2\leq n \Cwu^2\Cxu.
$$
Combined with Lemma \ref{lm:E} and Assumption \ref{ass:e}, for any $j=1,\ldots,p$ and $\b\in\cB$, for $t>0$, we have
$$
\pr\left(\left|
\sumn\sumam\Ia\ws{i} X_{ij} \epi{a}{i}
\right|
>\sqrt{n\Cwu^2\Cxu}t
\right)\leq
2\exp(-t^2/2\sigep^2).
$$
Then, we have
$$
\pr\left(
\maxm\|\Gwe{\b}\|_\infty
>\sqrt{n\Cwu^2\Cxu}t
\right)\leq
2Mp\exp(-t^2/2\sigep^2).
$$
Letting $2Mp\exp(-t^2/2\sigep^2)=2Mp/n$, we have that with probability at least $1-2Mp/n$,
\begin{equation}
\label{eq:Xe}
\maxm\|\Gwe{\b}\|_\infty\leq \sqrt{2\Cwu^2\Cxu} \sigep \sqrt{n\log(n)}.
\end{equation}
Finally, by Assumption \ref{ass:w},
\begin{equation}
\label{eq:dGamma}
\maxm\|\GhwX{\b}^{-1} \|_{\infty} \|\Ghwe{\b}-\Gwe{\b}\|_\infty=\Op\left(
\frac{\sqrt{p\,n}}{\Nmin}
\right).
\end{equation}
The result follows from \eqref{eq:beta}, \eqref{eq:XX}, \eqref{eq:Xe}, and \eqref{eq:dGamma}.
\end{proof}

\subsection{Proof of Theorem \ref{thm-ora}}

\begin{proof}[Proof of Theorem \ref{thm-ora}]
Recall that the true group structure is $\cup_{\b=1}^M\Gs{\b}$. The space of $\bz$'s that have the true group structure is defined as
$$
\Zor:=\left\{
\bz=(\bz_1^\T,\ldots,\bz_K^\T)^\T\in\R^{Kp}: \forall \b\in\cB, \forall a, a^\prime \in \Gs{\b}, \bz_a=\bz_{a^\prime}
\right\}.
$$
Note that $\hbzo\in\Zor$. 
Define the mapping
$$
T:\R^{Kp}\to\Zor, \quad \bz \mapsto \bzt,
$$
where $\bzt=(\bzt_1^\T,\ldots,\bzt_K^\T)^\T$ and 
$
\bzt_a = \sumapm\bz_{a^\prime}/|\Gs{\b}|$, $\forall a\in\Gs{\b}$.
Define the neighbor of projection vector $\bzs$ as
$$
\Theta = \left\{
\bz\in\R^{Kp}: \|\bz-\bzs\|_\infty\leq\phi_n
\right\}.
$$
Note that $\forall\bz\in\Theta$, we have $T(\bz)\in\Theta$. Define the event $E_1=\{\hbzo\in\Theta\}$.
By Theorem \ref{thm-con}, we have $\pr\left(E_1\right)\geq1-2Mp/n-\iota_n$.
The result follows from the following two statements, each of which we will prove separately.
\begin{itemize}
    \item \textbf{Statement 1} On event $E_1$, for all $ \bz\in\Theta$ such that $T(\bz)\neq\hbzo$, we have $Q_n(T(\bz))>Q_n(\hbzo)$.
    \item \textbf{Statement 2} There exists an event $E_2$ such that $\pr(E_2) \geq 1 - Kp/n-\iota_n$, and on $E_2$, for all $\bz \in \Theta$, we have $Q_n(\bz) \geq Q_n(T(\bz))$.
\end{itemize}

\textbf{Proof of Statement 1}. 
We examine $L_n(T(\bz))-L_n(\hbzo)$ and $P_n(T(\bz))-P_n(\hbzo)$, respectively.

\textbf{Step 1.1} (Examine $L_n$). By definition, restricted to $\Zor$, $\hbzo$ is the unique minimizer of $L_n(\bz)$, that is, for all $ \bz\in\Theta$ such that $T(\bz)\neq\hbzo$,
\begin{equation}
\label{eq:Ln}
L_n(T(\bz))>L_n(\hbzo).
\end{equation}

\textbf{Step 1.2} (Examine $P_n$).
For any $\bzt=(\bzt_1^\T,\ldots,\bzt_K^\T)^\T\in\Zor\cap\Theta$ (including the case $\bzt = \hbzo$),
$$
P_n(\bzt)=\sum_{1 \leqslant a<a^\prime \leqslant K} p_{\lambda_{n}}\left(\left\|{\bzt}_{a}-{\bzt}_{a^\prime}\right\|_{1}\right).
$$
If treatments belong to different groups, i.e., $a \in \Gs{\b}$ and $a^\prime \in \Gs{\b^\prime}$ with $\b \neq \b^\prime$, by Assumption \ref{ass:penalty}, we have 
\begin{align*}
\|\bzt_{a}-\bzt_{a^\prime}\|_1
&\geq \|\bzt_{a}-\bzt_{a^\prime}\|_\infty\\
&\geq \|\zs{a}-\zs{a^\prime}\|_\infty- 2\|\bzt-\bzs\|_\infty\\
&\geq c\lambda_n-2\phi_n\\
&\gg c\lambda_n/2.
\end{align*}
Then, $p_{\lambda_{n}}\left(\left\|{\bzt}_{a}-{\bzt}_{a^\prime}\right\|_{1}\right)$ is a constant by Assumption \ref{ass:penalty}.
If treatments belong to the same group, i.e., $a,a^\prime\in\Gs{\b}$, we have $\left\|{\bzt}_{a}-{\bzt}_{a^\prime}\right\|_{1}=0$, thus $p_{\lambda_{n}}(\left\|{\bzt}_{a}-{\bzt}_{a^\prime}\right\|_{1})=0$.
Overall, we have that $P_n(\bzt)$ is a constant. Since $T(\bz)\in\Zor\cap\Theta$ and $\hbzo\in\Zor\cap\Theta$ on $E_1$, we have
\begin{equation}
\label{eq:Pn}
P_n(T(\bz))=P_n(\hbzo).
\end{equation}
The Statement 1 follows from \eqref{eq:Ln} and \eqref{eq:Pn}.

\textbf{Proof of Statement 2}: Let $\bzt:=T(\bz)$. 
We examine $L_n(\bz) - L_n(\bzt)$ and $P_n(\bz) - P_n(\bzt)$, respectively.

\textbf{Step 2.1} (Examine $L_n$). By Taylor expansion, there is $0<\xi<1$ such that $\bzm=\xi\bz+(1-\xi)\bzt\in\Theta$, we have
\begin{align*}
L_n(\bz) - L_n(\bzt)
&=-\frac1{n} 
\sumn 
\suma\Ia \hw{i} 
\Big(
\tY_i-X_i^\T\bzm_a
\Big)
\Big(
X_i^\T\bz_a-X_i^\T\bzt_a
\Big)
\\
&=-\frac1{n} 
\sumn 
\suma\Ia \hw{i} 
\Big\{
\tY_i(a)-X_i^\T\bzm_a
\Big\}
\Big(
X_i^\T\bz_a-X_i^\T\bzt_a
\Big)
\\
&=-\frac1{n} 
\sumn 
\suma\Ia \hw{i} 
\Big\{
X_i^\T\zs{a}+\epi{a}{i}-X_i^\T\bzm_a
\Big\}
\Big(
X_i^\T\bz_a-X_i^\T\bzt_a
\Big)
\\
&=-\frac1{n} 
\sumn 
\suma\Ia \hw{i} 
\Big\{
X_iX_i^\T(\zs{a}-\bzm_a)
+X_i\epi{a}{i}
\Big\}^\T
\Big(
\bz_a-\bzt_a
\Big).
\end{align*}
Let
$$
\bv_a:=\frac1{n} 
\sumn \Ia \hw{i} 
\Big\{
X_iX_i^\T(\zs{a}-\bzm_a)
+X_i\epi{a}{i}
\Big\}.
$$
Then, we have $L_n(\bz) - L_n(\bzt)=-\suma \bv_a^\T(\bz_a-\bzt_a)$. Since $\bzt_a=\sumapm\bz_{a^\prime}/|\Gs{\b}|$, $\forall a\in\Gs{\b}$, by some algebra, we have
\begin{align*}
L_n(\bz) - L_n(\bzt)
&=-\summ\sumam \bv_a^\T(\bz_a-\bzt_a)\\
&=-\summ\sumam \bv_a^\T\left(\bz_a-\frac{\sumapm\bz_{a^\prime}}{|\Gs{\b}|}\right)\\
&=-\summ\sumam \sumapm \frac{\bv_a^\T\left(\bz_a-\bz_{a^\prime}\right)}{|\Gs{\b}|}\\
&=-\summ\sumam \sumapm \frac{
\left(\bv_a-\bv_{a^\prime}\right)^\T\left(\bz_a-\bz_{a^\prime}\right)}{2|\Gs{\b}|}\\
&=-\summ\sumaapm\frac{
\left(\bv_a-\bv_{a^\prime}\right)^\T\left(\bz_a-\bz_{a^\prime}\right)}{|\Gs{\b}|}\\
&\geq-\summ\sumaapm\frac{
\left\|\bv_a-\bv_{a^\prime}\right\|_\infty\left\|\bz_a-\bz_{a^\prime}\right\|_1}{|\Gs{\b}|}.
\end{align*}
By Assumptions \ref{ass:X} and \ref{ass:w}, we have
\begin{align*}
\left\|\frac1{n} 
\sumn \Ia \hw{i} 
X_iX_i^\T(\zs{a}-\bzm_a)
\right\|_\infty 
&\leq 
\left\|\frac1{n} 
\sumn \Ia \hw{i} 
X_iX_i^\T
\right\|_\infty 
\left\|\zs{a}-\bzm_a
\right\|_\infty \\
&=\Op(p\,\phi_n).   
\end{align*}
Following a similar derivation as in Step 2 of the proof of Theorem \ref{thm-con}, there exists an event $E_2$ such that $\pr(E_2) \geq 1 - Kp/n-\iota_n$, and on $E_2$, we have
$$
\left\|\frac1{n} 
\sumn \Ia \hw{i} 
X_i\epi{a}{i}
\right\|_\infty =O(\sqrt{n\log(n)}).   
$$
Thus, we have
$$
\left\|\bv_a-\bv_{a^\prime}\right\|_\infty\leq2\max_{a\in\cA}\left\|\bv_a\right\|_\infty=\Op\left(p\,\phi_n+\sqrt{n\log(n)}\right),
$$
and
\begin{equation}
\label{eq:Q1}
L_n(\bz) - L_n(\bzt)
\geq-\summ\sumaapm\frac{
\Op\left(p\,\phi_n+\sqrt{n\log(n)}\right)
}{|\Gs{\b}|}\left\|\bz_a-\bz_{a^\prime}\right\|_1.
\end{equation}

\textbf{Step 2.2} (Examine $P_n$).
Following similar arguments as in Step 1.2 of the proof of Statement 1,
if treatments belong to different groups, i.e., $a \in \Gs{\b}$ and $a^\prime \in \Gs{\b^\prime}$ with $\b \neq \b^\prime$, by Assumption \ref{ass:penalty}, we have $\left\|{\bz}_{a}-{\bz}_{a^\prime}\right\|_{1}\gg c\lambda_n/2$ and $\left\|{\bzt}_{a}-{\bzt}_{a^\prime}\right\|_{1}\gg c\lambda_n/2$, thus,
$$
p_{\lambda_{n}}\left(\left\|{\bz}_{a}-{\bz}_{a^\prime}\right\|_{1}\right)-p_{\lambda_{n}}\left(\left\|{\bzt}_{a}-{\bzt}_{a^\prime}\right\|_{1}\right)=0.
$$
If treatments belong to the same group, i.e., $a,a^\prime\in\Gs{\b}$, we have $\left\|{\bzt}_{a}-{\bzt}_{a^\prime}\right\|_{1}=0$, thus $p_{\lambda_{n}}(\left\|{\bzt}_{a}-{\bzt}_{a^\prime}\right\|_{1})=0$. However, since $\left\|{\bz}_{a} - {\bz}_{a^\prime}\right\|_{1} \neq 0$, they are the only terms contributing to $P_n(\bz) - P_n(\bzt)$.
Thus, we have
\begin{align}
\label{eq:Q2}
P_n(\bz) - P_n(\bzt)
&=\summ\sumaapm p_{\lambda_{n}}\left(\left\|{\bz}_{a}-{\bz}_{a^\prime}\right\|_{1}\right)\nonumber\\
&=\summ\sumaapm 
\frac{p_{\lambda_{n}}\left(\left\|{\bz}_{a}-{\bz}_{a^\prime}\right\|_{1}\right)}{\left\|{\bz}_{a}-{\bz}_{a^\prime}\right\|_{1}}
\left\|{\bz}_{a}-{\bz}_{a^\prime}\right\|_{1}.
\end{align}
We have $\left\|{\bz}_{a}-{\bz}_{a^\prime}\right\|_{1}\leq \left\|{\bz}_{a}-\zs{a}\right\|_{1}+\left\|{\bz}_{a^\prime}-\zs{a^\prime}\right\|_{1}\leq 2\phi_n\rightarrow 0$. 
By Assumption \ref{ass:penalty}, that is, $p_{\lambda_n}(\cdot)=\lambda_n\rho(\cdot)$, $\rho^\prime(0+) = 1$, and $\lambda_n \gg p\,\phi_n/\Kmin + \sqrt{n\log(n)}/\Kmin$, we have 
\begin{equation}
\label{eq:Q3}
\frac{p_{\lambda_{n}}\left(\left\|{\bz}_{a}-{\bz}_{a^\prime}\right\|_{1}\right)}{\left\|{\bz}_{a}-{\bz}_{a^\prime}\right\|_{1}}\geq O\left(
\frac{
p\,\phi_n+\sqrt{n\log(n)
}
}{\Kmin}
\right).
\end{equation}
The Statement 2 follows from \eqref{eq:Q1} \eqref{eq:Q2}, and \eqref{eq:Q3}.
\end{proof}

\section{Additional Simulation Results}
\label{sec:add}

\subsection{Simulations for increasing $K$ and fixed $n$}

We keep the sample size fixed at $n=1800$ and increase the number of treatments $K$ from 16 to 32 and 48. In such regime, our proposed method outperforms other baselines in terms of fusion quality and policy value (see Tables \ref{tab:sim32} and \ref{tab:sim48}). The number of recovered groups increases slightly, as expected. 

\begin{table}[ht]
\centering
\begin{threeparttable}
\caption{Simulation results for $K=32$}
\label{tab:sim32}
\begin{tabular}{cccc}
\hline
Method & ARI & Number of groups & Value \\
\hline
policy tree (baseline) & / & 32 & 8.66 (0.18) \\
fusion + policy tree & 0.21 (0.09) & 16.74 (3.64) & 8.63 (0.18) \\
CW + fusion + policy tree & \textbf{0.85} (0.10) & \textbf{5.68} (1.72) & \textbf{8.80} (0.21) \\
\citet{ma2022learning} & 0.21 (0.09) & 16.74 (3.64) & 8.52 (0.12) \\
\hline
\end{tabular}
\begin{tablenotes}
\footnotesize
\item CW = Calibration Weighting. ARI (Adjusted Rand Index for fusion quality) and policy value: higher is better. Oracle number of groups = 4. Numbers in parentheses are Monte Carlo standard errors. Results are averaged over 200 runs.
\end{tablenotes}
\end{threeparttable}
\end{table}

\begin{table}[ht]
\centering
\begin{threeparttable}
\caption{Simulation results for $K=48$}
\label{tab:sim48}
\begin{tabular}{cccc}
\hline
Method & ARI & Number of groups & Value \\
\hline
policy tree (baseline) & / & 48 & 8.49 (0.20) \\
fusion + policy tree & 0.16 (0.08) & 21.00 (5.58) & 8.40 (0.30) \\
CW + fusion + policy tree & \textbf{0.74} (0.10) & \textbf{7.35} (2.31) & \textbf{8.52} (0.23) \\
\citet{ma2022learning} & 0.16 (0.08) & 21.00 (5.58) & 8.41 (0.12) \\
\hline
\end{tabular}
\begin{tablenotes}
\footnotesize
\item CW = Calibration Weighting. ARI (Adjusted Rand Index for fusion quality) and policy value: higher is better. Oracle number of groups = 4. Numbers in parentheses are Monte Carlo standard errors. Results are averaged over 200 runs.
\end{tablenotes}
\end{threeparttable}
\end{table}

\subsection{Simulations under misspecified weighting model}

We additionally considered a scenario where the outcome mean functions are linear (i.e., correctly specified), but the weighting model is misspecified. Specifically, in calibration weighting, we excluded $X_1$ and used only $X_2$ and $X_3$. The outcome mean functions were set as follows:
\begin{align*}
Y_1 & =  2.5 +0.5X_1-1.5X_2-X_3,\\
Y_2 & = X_1-2X_2-2.5X_3,\\
Y_3 & = 2-0.5X_1+2X_2-2X_3,\\
Y_4 & = -1+X_1-X_2+X_3.
\end{align*}

\begin{table}[ht]
\centering
\begin{threeparttable}
\caption{Simulation results under misspecified weighting model}
\label{tab:sim_dr}
\begin{tabular}{cccc}
\hline
Method & ARI & Number of groups & Value \\
\hline
policy tree (baseline) & / & 16 & 6.35 (0.06) \\
fusion + policy tree & 0.88 (0.13) & 5.42 (1.42) & 6.41 (0.04) \\
CW + fusion + policy tree & \textbf{0.96} (0.06) & \textbf{4.46} (0.66) & \textbf{6.43} (0.02) \\
Ma et al. (2022) & 0.88 (0.13) & 5.42 (1.42) & 6.39 (0.00) \\
\hline
\end{tabular}
\begin{tablenotes}
\footnotesize
\item CW = Calibration Weighting. ARI (Adjusted Rand Index for fusion quality) and policy value: higher is better. Oracle number of groups = 4. Numbers in parentheses are Monte Carlo standard errors. Results are averaged over 200 runs.
\end{tablenotes}
\end{threeparttable}
\end{table}

Table \ref{tab:sim_dr} presents the results. As the outcome models are correctly specified, both the fusion (based on a linear model) and the CW + fusion approaches achieved strong ARI scores, illustrating the double robustness property of CW + fusion. \citet{ma2022learning} also performed well, as their method relies on the same linearity assumption, which holds in this setting. Nevertheless, our proposed method consistently achieved the best overall performance.
Note that the simulation in Section \ref{sec:simulation} already showed the advantage of the proposed method when the outcome model is misspecified while the weighting model is correct.


\end{document}